\def\xv{\boldsymbol{x}}
\def\hv{\boldsymbol{h}}
\def\E{\mathbb{E}}
\def\vv{\boldsymbol{v}}
\def\wv{\boldsymbol{w}}
\def\yv{\boldsymbol{y}}
\newtheorem{remark}{Remark}
\newenvironment{proof}{\begin{IEEEproof}}{\end{IEEEproof}}
\newcommand{\defeq}{\triangleq}
\newtheorem{theorem}{Theorem}
\newtheorem{corollary}{Corollary}[theorem]
\newtheorem{example}{Example}
\begin{document}
\title{Adding transmitters can dramatically boost the finite-filesize gains of coded caching}
\title{Adding transmitters can dramatically boost the subpacketization-constrained DoF of coded caching}
\title{Adding transmitters can dramatically boost the subpacketization-limited receiver caching gains}
\title{Adding transmitters dramatically boosts subpacketization-limited receiver caching gains}
\title{Adding transmitters dramatically boosts receiver caching gains for finite file sizes}
\title{Adding transmitters dramatically boosts coded-caching gains for finite file sizes}

\author{Eleftherios~Lampiris and Petros~Elia
\thanks{The authors are with the Communication Systems Department at EURECOM, Sophia Antipolis, 06410, France (email: lampiris@eurecom.fr, elia@eurecom.fr).
The work is supported by the European Research Council under the EU Horizon 2020 research and innovation program / ERC grant agreement no. 725929.}
\thanks{Manuscript received December 10 2017.}
}

\maketitle

\doublespacing

\begin{abstract}
In the context of coded caching in the $K$-user BC, our work reveals the surprising fact that having multiple ($L$) transmitting antennas, dramatically ameliorates the long-standing subpacketization bottleneck of coded caching by reducing the required subpacketization to approximately its $L$th root, thus boosting the actual DoF by a \emph{multiplicative} factor of up to $L$. In asymptotic terms, this reveals that as long as $L$ scales with the theoretical caching gain, then the full cumulative (multiplexing + full caching) gains are achieved with constant subpacketization. This is the first time, in any known setting, that unbounded caching gains appear under finite file-size constraints. The achieved caching gains here are up to $L$ times higher than any caching gains previously experienced in any single- or multi-antenna fully-connected setting, thus offering a multiplicative mitigation to a subpacketization problem that was previously known to hard-bound caching gains to small constants.

The proposed scheme is practical and it works for all values of $K,L$ and all cache sizes. The scheme's gains show in practice: e.g. for $K=100$, when $L=1$ the theoretical caching gain of $G=10$, under the original coded caching algorithm, would have needed subpacketization $S_1 = \binom{K}{G}= \binom{100}{10} > 10^{13}$, while if extra transmitting antennas were added, the subpacketization was previously known to match or exceed $S_1$. Now for $L=5$, our scheme offers the theoretical (unconstrained) cumulative DoF $d_L = L+G = 5+10=15$, with subpacketization $S_L=\binom{K/L}{G/L} =\binom{100/5}{10/5} = 190$. The work extends to the multi-server and cache-aided IC settings, while the scheme's performance, given subpacketization $S_L=\binom{K/L}{G/L}$, is within a factor of 2 from the optimal linear sum-DoF.
\end{abstract}

\begin{IEEEkeywords}
Caching, Coded Caching, Subpacketization, Multiple antennas, Transmitter cooperation, DoF.
\end{IEEEkeywords}

%
\IEEEpeerreviewmaketitle

\section{Introduction}\label{sec:intro}
\IEEEPARstart{C}{oded} caching is a communication method invented in~\cite{MN14} that exploits receiver-side caches in broadcast-type communications, to achieve substantial throughput gains by delivering independent content to many users at a time. This method involves a cache placement phase and a delivery phase. During the placement phase, content from a library of files that are present at the transmitter, is properly pre-cached at the receiver caches. During the delivery phase --- which starts when users simultaneously request one desired library file each --- the transmitter encodes across different users' requested data content, in a way that creates multicasting opportunities even when users request different files.

Specifically the work in~\cite{MN14} considered the single-stream broadcast channel (BC) scenario where a single-antenna transmitter has access to a library of $N$ files, and serves $K$ receivers, each having a cache of size equal to the size of $M$ files. In a normalized setting where the link has capacity 1 file per unit of time, the work in~\cite{MN14} showed that any set of $K$ simultaneous requests can be served with normalized delay (worst-case completion time) which is at most $T = K(1-\gamma)/(1+K\gamma)$ where $\gamma \defeq M/N $ denotes the normalized cache size. This was a major breakthrough because it showed that an ever-increasing number of users can be served in finite time that converges to $T\approx \frac{1}{\gamma} = \frac{N}{M}$ as $K$ increases. This result implied a sum-DoF of
\[d_1(\gamma) = K(1-\gamma)/T = 1+K\gamma\] users served at a time. Given that in the absence of caching, only one user could be served at a time (because $d_1(\gamma=0) = 1$), the above implied a (theoretical) caching gain of
\[ G = d_1(\gamma)- d_1(\gamma=0) = K\gamma \]
representing the number of extra users that could be served at a time, additionally, as a consequence of introducing caching.

This massive theoretical gain came about because coded caching managed to remove the main inherent inefficiency of traditional caching methods, in which each receiver only ends up utilizing the cached fraction of just the one single file that that receiver had requested, while leaving all other information in the cache unused.
On the other hand, with coded caching, each receiver was now able to utilize the cached fraction of all $K$ requested files; The cached content of its own requested file provided the traditional local caching gain, while the cached content of the $K-1$ files requested by others, were now used to cancel the interference caused by those same files.

 This gain --- which is close to the theoretic optimal~\cite{MN14} --- was shown to persist under a variety of settings that include uneven popularity distributions~\cite{NiesenMtit17Popularity,ZhangLW15,JiTLC14}, uneven topologies~\cite{BidokhtiWT16isit,ZhangE16b}, a variety of channels such as erasure channels~\cite{GKY:15}, MIMO broadcast channels with fading~\cite{ZE:17tit}, a variety of networks such as heterogeneous networks~\cite{HachemKD15}, D2D networks~\cite{JiCM16D2D}\nocite{EE17}, and in other settings as well.

\subsection{Subpacketization bottleneck of coded caching\label{sec:IntroBottleneck}}
While though in theory, this caching gain $G = K\gamma$ increased indefinitely with increasing $K$, in practice the gain remained --- under most realistic assumptions --- hard-bounded by small constants, due to the fact that the underlying coded caching algorithms required the splitting of finite-length files into an exponential number of subpackets\footnote{Such high subpacketization originates from the fact that each file appears in each cache, and thus during delivery, a user must work together with all other users to get her file. This works --- at least in the original algorithm by Maddah Ali and Niesen --- by forming cliques of $K\gamma +1$ users, each requesting one subfile, where each user knows all subfiles requested from the clique, except the one that she herself requests. There are a total of $\binom{K}{K\gamma}$ cliques in which a specific user will have to be part of, and all of the cliques must be used; hence the need to split each file into $\binom{K}{K\gamma}$ different subfiles.}.
For the algorithm in \cite{MN14} in the original single-stream scenario, the near-optimal (and under some basic assumptions, optimal \cite{WanTP15,YuMA16}) gain of $G=K\gamma$, was achieved only if each file was segmented at least into a total of
\begin{equation} \label{eq:Fss}
S_1 = \binom{K}{K\gamma} \end{equation}
subpackets.
As a result, having a certain maximum-allowable subpacketization of $S_{max}$, implied that one could only encode over a maximum of
\begin{equation} \label{eq:barK}
\bar{K} = \arg\max_{K^o \leq K} \left\{ \binom{K^o}{K^o\gamma} \leq S_{max} \right\}
\end{equation}
users, which in turn implied a substantially reduced \emph{effective caching gain} $\bar{G}_1$ of the form
\begin{equation} \label{eq:EffectiveGainMN}
\bar{G}_1 = \bar{K}\gamma.
\end{equation}
Given that
\begin{equation} \label{eq:approximationSmn}
\binom{\bar{K}}{\bar{K}\gamma} \in \left[\left(\frac{1}{\gamma}\right)^{\bar{K}\gamma},\left(\frac{e}{\gamma}\right)^{\bar{K}\gamma}\right] = \left[\left(\frac{1}{\gamma}\right)^{\bar{G}_1},\left(\frac{e}{\gamma}\right)^{\bar{G}_1}\right]\end{equation}
this effective gain $\bar{G}_1$ was bounded as
\begin{equation}\label{barGc}
\frac{\log S_{max}}{1+\log\frac{1}{\gamma}} \leq \bar{G}_1  \leq \frac{\log S_{max}}{\log\frac{1}{\gamma}}, \ \ \ \ \bar{G}_1  \leq G
\end{equation}
($\log$ is the natural logarithm) which succinctly reveals that the effective caching gain $\bar{G}_1$ (and the corresponding \emph{effective sum-DoF} $\bar{d}_{1} \defeq 1+\bar{G}$) is placed under constant pressure from the generally small values\footnote{It is worth noting here that, as argued in~\cite{EJR:15}, in wireless cellular settings, the storage capacity at the end users is expected to induce $\gamma$ that can be less than $10^{-2}$, which --- for a given target caching gain --- implies the need to code over many users, which in turn increases subpacketization. Compounding on this problem, there is a variety of factors that restrict the maximum allowable subpacketization level $S_{max}$. One such parameter is the file size; for example, movies are expected to have size that is close to or less than 1 Gigabyte. Additionally, in applications like video streaming, a video file it self may be broken down into smaller independent parts (on which subpacketization will take place separately), in order to avoid the delay that comes from the asynchronous nature of decoding XORs in coded caching. Such restricted file sizes may be in the order of just a few tens of Megabytes. Another parameter that restricts $S_{max}$ is the minimum packet size; the atomic unit of storage is not a bit but a sector (newer `Advanced Format' hard drives use 4096-byte sectors and force zero-padding on the remaining unused sector), and similarly the atomic communication block is the packet, which must maintain a certain minimum size in order to avoid communication delay overheads.} of $\gamma$ and of $S_{max}$. This is reflected in Figure~\ref{fig:SccEffectiveGains0} and Figure~\ref{fig:SccEffectiveGains}.
\begin{figure}[ht!]
  \centering
\includegraphics[width=0.5\columnwidth, angle=270]{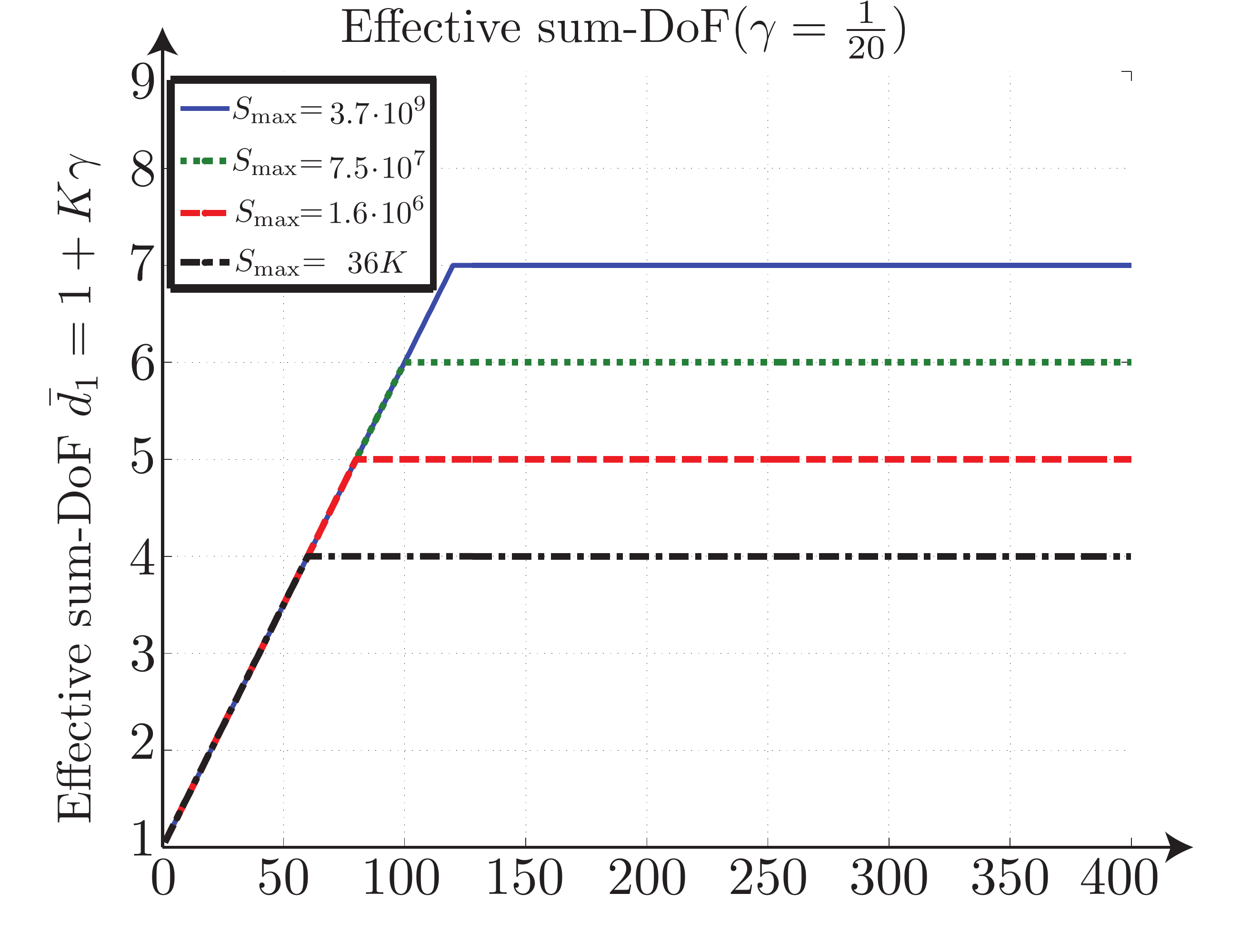}
\caption{Maximum effective DoF $\bar{d}_1$ achieved by the original centralized algorithm (single antenna, $\gamma = 1/20$) in the presence of different subpacketization constraints $S_{max}$. The gain is hard-bounded irrespective of $K$.}
\label{fig:SccEffectiveGains0}
\end{figure}
\begin{figure}[t!]
  \centering
\includegraphics[width=0.5\columnwidth, angle=270]{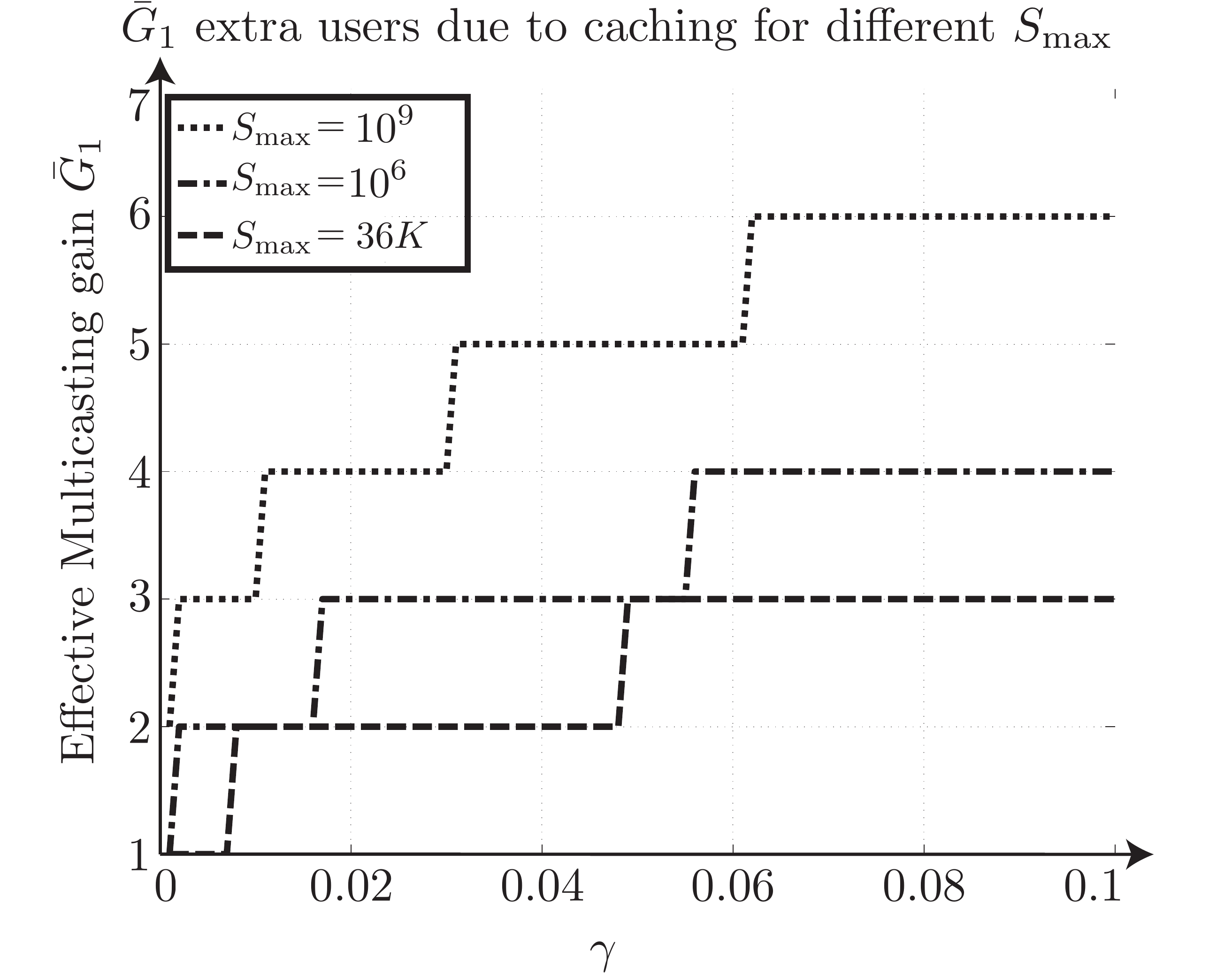}
\caption{Effective caching gain $\bar{G}_1 = \bar{d}_1-1$ (maximized over $K$) of the original algorithm for different $S_{max}$. Without subpacketization constraints, the theoretical gain is $G=K\gamma$ (unbounded as $K$ increases).}
\label{fig:SccEffectiveGains}
\end{figure}

\vspace{3pt}
\begin{example}\label{ex:ExLis1}
Looking at Figure~\ref{fig:SccEffectiveGains}, we see that if the library files (e.g. movies) are each of size 1 Gigabyte, and under a constraint that each packet cannot be less than 1 Kilobyte (KB) long (which jointly imply a subpacketization limit of $S_{max}\approx 10^6$), then having $\gamma<1/20$ would hard-bound the effective caching gain $\bar{G}_1$ to be less than $4$ (we add one extra in comparison to the plot, in order to account for any possible improvements from memory-sharing between operating points that yield neighboring integer-valued gains). This gain reduction is because we are forced to encode over less than $\bar{K} = 80$ users, to avoid a subpacketization $\binom{80}{4}>10^6$ that exceeds $S_{max}$. Having $\gamma<1/100$ would limit this gain $\bar{G}_1$ to be less than $3$ (since $\bar{K} = 300$ implies subpacketization $\binom{300}{3}>10^6$). When $S_{max} = 10^9$, where each packet consists of a single byte (without taking into consideration the overhead from using byte-sized packets), then having $\gamma<1/20$ would limit the effective gain to less than 6, while having $\gamma<1/100$ would limit the number $\bar{G}_1$ of additional users that could be served due to caching, to less than 4. When $S_{max} \approx 36K$, reflecting perhaps low-latency video streaming applications, for $\gamma\leq 1/20$ then $\bar{G}_1 \approx 3$ ($\bar{d}_{1} \approx 4 $ users at a time), while for $\gamma\leq 1/100$ then $\bar{G}_1 \approx 2$ ($\bar{d}_{1} \approx 3 $).
\end{example}
\vspace{3pt}

Similar conclusions were highlighted in \cite{ShanmugamJTLD16it}\nocite{JiSVLTC15}, in the context of decentralized coded caching algorithms (cf.~\cite{MND13}).

\paragraph*{New coded caching algorithms with reduced subpacketization}
This subpacketization bottleneck sparked significant interest in designing coded caching algorithms which can provide further caching gains under reduced subpacketization costs. A first breakthrough came with the work in \cite{yan2015placement} (see also \cite{tang2016coded}) which reformulated the coded caching problem into a \emph{placement-delivery array } (PD) combinatorial design problem, and which exploited interesting connections between coded caching and distributed storage to design an algorithm that provided a maximum theoretical caching gain of $G_{1,pd} = K\gamma-1$ (treating a total of $K\gamma$, rather than $K\gamma+1$, users at a time), at a reduced subpacketization of
\[S_{1,pd} = \left(\frac{1}{\gamma}\right)^{K\gamma-1} = \left(\frac{1}{\gamma}\right)^{G_{1,pd}}\]
thus allowing --- under some constraints on the operating parameters --- for an effective caching gain of
\begin{equation}\label{eq:RamQifaEffectiveGain}
\bar{G}_{1,pd}  = \min\left\{ \frac{\log S_{max}}{\log\frac{1}{\gamma}}, K\gamma-1\right\}.
\end{equation}

Similar conclusions were also drawn in \cite{tang2016coded} which used linear codes (LC) over high-order finite fields, to create set partitions that identify --- under some constraints on the values of $\gamma$ --- how the subpackets are cached and delivered, thus allowing for a tradeoff between an adjustable theoretical gain $G_{1,lc} \leq K\gamma-1$ and the corresponding subpacketization $S\approx \left(\frac{1}{\gamma}\right)^{G_{1,lc}}$, resulting in a similar effective gain of $\bar{G}_{1,lc}  \approx \frac{\log S_{max}}{\log\frac{1}{\gamma}}$ (naturally again the effective gain $\bar{G}_{1,lc}$ cannot exceed the theoretical gain $G_{1,lc}$).
Another breakthrough was presented in~\cite{ShangguanZG16} which took a hyper-graph theoretic approach to show that there do not exist caching algorithms that achieve a constant $T$ ($T$ is independent of $K$) with subpacketization that grows linearly\footnote{This assumes that $\gamma$ is independent of $K$, that each file is divided into an identical number of subpackets, and also assumes uncoded cache placement.} with $K$. This work also provided constructions which nicely tradeoff performance with subpacketization, which require though (Construction 6) that $K>4/\gamma^2$ (approximately) in order\footnote{$K$ must be large because the theoretical gain is reduced and is approximately $K\gamma^2/4$. $K$ must also be (essentially) a square integer; square integers become rarer as $K$ increases.} to have gains bigger than 1. Another milestone of a more theoretical nature was the very recent work in~\cite{shanmugam2017coded} which employed the Ruzsa-Szem\'{e}redi graphs to show for the first time that, under the assumption of (an unattainably) large $K$, one can get a (suboptimal) gain that scales with $K$, with a subpacketization that scales with $K^{1+\delta}$ for some arbitrarily small positive $\delta$.

While indeed different new algorithms provide exponential reduction in subpacketization, the corresponding improvement on the actual gain $\bar{G}$ --- over the original (MN) algorithm in~\cite{MN14}, for realistic values of $\gamma$ and $S_{max}$ --- remains hard bounded and small. For example, for $\gamma\leq 1/20$ and $S_{max} \leq 10^5$, no known algorithm can improve over the MN algorithm's effective caching gain (and effective DoF) by more than two\footnote{This best-known improvement is due to Construction 6 in~\cite{ShangguanZG16} ($a=b=2, \lambda = 40$) which encodes over $\bar{K}=3160$ users to give an effective sum-DoF of 6, while the MN algorithm gives a DoF of 4 (with $\bar{K}=60$).} (2 additional users served at a time) (see also Section~\ref{sec:resultsAlternateSchemes}).

\subsection{Coded caching with multiple transmitters \label{sec:IntroMultiAntenna}}
At the same time, different works (cf.\cite{ShariatpanahiMK16it,NMA:17TIT} as well as ~\cite{SenguptaTS15,CaoTXL16,HachemND16a,RoigTG17a,YangNK16a,ShariatpanahiCK17,PiovanoJC17,ZE:17tit,ZFE:15,ZhangE16asmallCaches}
and others) aimed at complementing such caching gains, with additional multiplexing gains that can appear when there are several transmitters.
One pioneering work in this direction was found in~\cite{ShariatpanahiMK16it} which considered a setting with $L = \lambda K, \ (\lambda\in(0,1))$ transmitters/servers
communicating (in the fully-connected BC context of a so-called `linear network' that can translate readily to a $K$-user wireless MISO BC with $L$ antennas) to $K$ single-antenna cache-aided receivers, and which provided a scheme that achieved a theoretical sum-DoF of \[d_{L}(\gamma) = L+K\gamma\] corresponding to a MIMO multiplexing gain of $L$ (users served, per second per hertz) and an additional theoretical caching gain of again $G=K\gamma$ (extra users served at a time, due to caching). This theoretical caching gain though was again restricted to an effective caching gain that was less than the effective gain $\bar{G}_{1}$ achieved in the single antenna case, because of a further increased subpacketization which now took the form
\begin{equation} \label{eq:Sms}
S = \binom{K}{K\gamma}\binom{K-K\gamma-1}{L-1}.
\end{equation}
While the subpacketization-constrained (effective) gains may have been reduced, this work in~\cite{ShariatpanahiMK16it} nicely showed that multiplexing and caching gains can in theory be combined additively.

Soon after, the work in \cite{NMA:17TIT} explored the scenario where coded caching involved both transmitter-side and receiver-side caches. In the context of a cache-aided interference scenario --- where $K_T$ transmitters with normalized cache size $\gamma_T$ (each transmitter could only store a fraction $\gamma_T$ of the entire $N$-file library), communicated to $K$ receivers with normalized cache size $\gamma$ --- the work provided a scheme that employed subpacketization \begin{equation} \label{eq:Sic}
S = \binom{K}{K\gamma}\binom{K_T}{K_T\gamma_T}\end{equation}
to achieve a sum-DoF of $\frac{K(1-\gamma)}{T} = K_T\gamma_T+K\gamma$ which was also proven to be at most a factor of 2 from the optimal (one-shot) linear-DoF. This nicely revealed that --- in the regime of unbounded subpacketization (unbounded file sizes) --- the cooperative multiplexing gain $K_T\gamma_T$ which is an outcome of the caching redundancy $K_T\gamma_T$ at the transmitter-side caches, can be additively combined with the theoretical caching gain $G = K\gamma$ attributed to receiver-side caching redundancy\footnote{By referring to transmitter-side redundancy $K_T\gamma_T$ and receiver-side redundancy $K\gamma$, we simply refer to the fact that each subfile resides in the caches of $K_T\gamma_T$ transmitters and in the caches of $K\gamma$ receivers.} $K\gamma$.
In both cases \cite{ShariatpanahiMK16it,NMA:17TIT}, the addition of the extra dimensions on the transmitter side, maintained the theoretical caching gains, added extra multiplexing gains, but maintained high subpacketization levels with generally reduced actual caching gains.

To the best of our knowledge, under the generous assumptions that $S_{max} \leq 10^5$, $\gamma \leq 1/50$ and $K\leq 10^5$, currently there exists no method \emph{in any known single-antenna or multi-antenna }fully connected setting, that allows for the introduction of more than $\bar{G} = 5$ additional users (per second per hertz, i.e., served at a time) due to caching\footnote{This corresponds to Construction 6 in~\cite{ShangguanZG16} ($a=b=2, \lambda = 100$), and it requires approximately 20000 users.}.

\subsection{Preview of results and paper outline}
Our contribution lies in the realization that having this extra dimensionality on the transmitter side, in fact reduces rather than increases subpacketization, and does so in a very accelerated manner. We will show a simple scheme for the multi-antenna/multi-node setting, that maintains the theoretical DoF \[d_L = L+G = L+K\gamma = K_T\gamma_T+K\gamma\] and does so with subpacketization
\begin{equation}\label{eq:ReducedSubpacketization1}
S_L = \binom{\frac{K}{L}}{\frac{K\gamma}{L}} = \binom{\frac{K}{K_T\gamma_T}}{\frac{K\gamma}{K_T\gamma_T}}
\end{equation}
which is approximately the $L$th root $S_L \simeq\sqrt[L]{S_{1}}$ of the original subpacketization $S_{1}=\binom{K}{K\gamma}$ corresponding to $L=1$. This will apply for all parameters $K,L,\gamma, K_T,\gamma_T$, it will imply very substantial subpacketization reductions even when $L$ is very small, as well as will imply that the theoretical DoF $d_L = L+K\gamma $ can be achieved with subpacketization $S_L = 1/\gamma = K/L$ when $L$ matches $K\gamma$.
The above expression~\eqref{eq:ReducedSubpacketization1} will imply a multi-antenna effective DoF
\[\bar{d}_L  = \min\{ L\cdot \bar{d}_1, d_L=L+K\gamma \}\] which is either $L$ times the single-antenna effective DoF $\bar{d}_1$, or it is the theoretical (unconstrained) $d_L=L+K\gamma$. In the end, we now know that having multiple antennas at the transmitter, not only provides a multiplexing gain, but also a multiplicative boost of the \emph{receiver-side} effective caching gain.

Finally, similar multiplicative boosts of the caching gain will be achieved when we apply the ideas here in conjunction with a variety of different underlying coded caching algorithms (see Section~\ref{sec:resultsAlternateSchemes}) like the ones in \cite{yan2015placement,tang2016coded}.

\paragraph*{Paper outline}

Section~\ref{sec:systemModel} elaborates on the system and channel model, Section~\ref{sec:Scheme} describes the scheme and presents simple examples of the construction, while Section~\ref{sec:mainResults} presents the main results which are accompanied with related examples of practical interest. The schemes and results are presented first for the integer case where $L | K$ and $L | K\gamma$ ($L$ divides $K$ and $K\gamma$), but we emphasize that the performance loss after removing the integer constraint, is very small (as we see in the appendix Section~\ref{sec:nonInteger}). Section~\ref{sec:conclusions} offers some conclusions, then the appendix Section~\ref{sec:IC} shows the details of how to adapt our approach to the cache-aided interference scenario with multiple independent cache-aided transmitters, while the appendix Section~\ref{sec:nonInteger} describes the slightly modified scheme for all $L,K$ when the assumptions $L | K$ and $L | K\gamma$ are removed.

\subsection{Notation}\label{sec:notation}
For clarity, we begin by recalling the common notation.
\begin{itemize}
\item $d_1(\gamma) = 1+K\gamma$ : Theoretical DoF ($L=1$)
\item $d_L(\gamma) = L+K\gamma$ : Theoretical DoF (multiple antennas)
\item $d_L(\gamma=0) = L$ : Multiplexing gain
\item $G$: Theoretical caching gain
    \begin{itemize}
    \item $G = d_1(\gamma) - d_1(\gamma=0) = d_L(\gamma) - d_L(\gamma=0) = K\gamma$
    \item $G$ additional users served at a time, due to caching\footnote{The choice here to measure the caching gain as the DoF difference $G = d_1(\gamma) - d_1(\gamma=0) = d_L(\gamma) - d_L(\gamma=0) = K\gamma$ rather than the DoF ratio, comes from the fact that in theory, the two gains (multiplexing and caching gains) appear to aggregate in an additive manner (this is discussed also in \cite{NMA:17TIT}). This choice of $G$ seems better suited for multi-antenna settings because a) it cleanly removes the multiplexing gain thus better isolating the true effect of caching, b) it reflects a caching gain that does not inevitably vanish with increasing $L$ (as would have happened had we used the DoF ratio), and c) it reflects a caching gain that scales with the cumulative cache size at the receiver side (i.e., scales with $K$).}
    \end{itemize}
\item $S_1 = \binom{K}{K\gamma}$: Subpacketization needed for theoretical $G$ ($L=1$)
\item $S_{max}$: Maximum allowable subpacketization
\end{itemize}
$\\[-15pt]$
\begin{itemize}
\item $S_L$: Subpacketization needed for theoretical $G$ (multiple antennas)
\item $\bar{d}_1(\gamma)$ : Effective (subpacketization-constrained) DoF ($L=1$)
\item $\bar{G}_1 = \bar{d}_1(\gamma) - 1$ : Effective caching gain ($L=1$)
\item $\bar{d}_L(\gamma)$ : Effective DoF (multiple antennas)
\item $\bar{G}_L = \bar{d}_L(\gamma) - L$ : Effective caching gain (multiple antennas)
\end{itemize}
In the above, $\bar{d}_L(\gamma = 0) = d_L(\gamma = 0) = L$ is the multiplexing gain, and $\bar{G}_L$ is the effective caching gain describing the actual number of additional users that can be served at a time as a result of introducing caching, under a subpacketization constraint. Finally the effective DoF $\bar{d}_L(\gamma) = L + \bar{G}_L$ describes the actual (total) number of users that can be served at a time, under a subpacketization constraint.

Furthermore we employ the following notation. $\mathbb{Z}$ will represent the integers, $\mathbb{Z}^{+}$ the positive integers, $\mathbb{R}$ the real numbers, and $\binom{n}{k}$ the $n$-choose-$k$ (binomial) operator. We will use $[K]\defeq \{1,2,\cdots,K\}$. If $\mathcal{A}$ is a set, then $|\mathcal{A}|$ will denote its cardinality. For sets $\mathcal{A}$ and $\mathcal{B}$, then $\mathcal{A} \backslash \mathcal{B}$ denotes the difference set. The expressions $\alpha | \beta$ (resp. $\alpha\nmid\beta$) denote that integer $\alpha$ divides (resp. does not divide) integer $\beta$.
Complex vectors will be denoted by lower-case bold font. We will use $||\xv||^2$ to denote the magnitude of a vector $\xv$ of complex numbers. Furthermore if $\mathcal{A}\subset [K]$ is a subset of users, then we will use $\mathbf{H}^{\mathcal{A}}$ to denote the overall channel from the $L$-antenna transmitter to the users in $\mathcal{A}$. Logarithms are of base~$e$. In a small abuse of notation, we will sometimes denote data sets the same way we denote the complex numbers (or vectors) that carry that same data.

\section{System and channel model\label{sec:systemModel}}
We initially consider the $K$-user multiple-input single-output (MISO) broadcast channel\footnote{We note that while the representation here is of a wireless model, the result applies directly to the multi-server setting of \cite{ShariatpanahiMK16it} with a fully connected linear network. We will also show at the end of this paper how the work here applies to the cache-aided interference scenario of \cite{NMA:17TIT}. Finally we note that in the DoF regime of interest, the single-antenna wireless setting ($L=1$) matches identically (in terms of the characteristics and performance) the original single-stream shared-link setting in \cite{MN14}.}, where an $L$-antenna transmitter communicates to $K$ single-antenna receiving users. The transmitter has access to a library of $N$ distinct files $W_1,W_2, \dots, W_N$, each of size $|W_n| = f$ bits. Each user $k \in \{1,2,\dots,K\}$ has a cache $Z_k$, of size $|Z_k| = Mf$ bits, where naturally $M \leq N$. Communication consists of the aforementioned \emph{content placement phase} and the \emph{delivery phase}. During the placement phase the caches $Z_1, Z_2, \dots, Z_K$ are pre-filled with content from the $N$ files $\{W_n\}_{n=1}^{N}$.

The delivery phase commences when each user $k$ requests from the transmitter, any \emph{one} file $W_{R_k}\in \{W_n\}_{n=1}^{N}$, out of the $N$ library files. Upon notification of the users' requests, the transmitter aims to deliver the (remaining of the) requested files, each to their intended receiver, and the challenge is to do so over a limited (delivery phase) duration $T$.
During this delivery phase, for each transmission, the received signals at each user $k$, will be modeled as
\begin{align}
y_{k}=\hv_{k}^{T} \xv + w_{k}, ~~ k = 1, \dots, K
\end{align}
where $\xv\in\mathbb{C}^{L\times 1}$ denotes the transmitted vector satisfying a power constraint $\E(||\xv||^2)\leq P$, where $\hv_{k}\in\mathbb{C}^{L\times 1}$ denotes the channel of user $k$ in the form of the random vector of fading coefficients that can change in time and space, and where $w_{k}$ represents unit-power AWGN noise at receiver $k$. We will assume that $P$ is high (high SNR), we will assume perfect channel state information throughout the (active) nodes as in \cite{ShariatpanahiMK16it,NMA:17TIT}, and we will assume that the fading process is statistically symmetric across users.

As in \cite{MN14}, $T$ is the number of time slots, per file served per user, needed to complete the delivery process, \emph{for any request}. The wireless link capabilities, and the time scale, are normalized such that one time slot corresponds to the optimal amount of time it would take to communicate a single file to a single receiver, had there been no caching and no interference
\footnote{As in the single-stream case in~\cite{MN14}, this achievable delay here is simply the minimum delay that allows, in the information theoretic sense (thus, under sufficiently long file sizes $f$), that each receiver $k$ decodes (with probability 1) its message $W_{R_k}$. $T$ reflects the maximum such minimum delay, maximized over all possible requests $\{W_{R_k}\}_{k=1}^K$. The high-SNR normalized delay $T$ (cf.~\cite{ZFE:15}; see also~\cite{SenguptaTS15,CaoTXL16}) used here, accounts for the file sizes and the high-SNR link capacity scaling $\log(\text{SNR})$, and is thus identical to the rate measure used in~\cite{MN14} for the single-stream error-free setting. Consequently in the high SNR setting of interest, an inversion leads to the equivalent measure of the cache-aided sum DoF $d_L(\gamma)=\frac{K(1-\gamma)}{T}$, as this is defined in~\cite{MNisit:15} in the context of transmitter-side caching, and in~\cite{ZFE:15} in the context of receiver-side caching (see also~\cite{SenguptaTS15,CaoTXL16}). The sum-DoF is the sum of multiplexing and theoretical caching gains, and -- as stated -- describes the total amount of users served at a time.}.

As in~\cite{MN14}, we will first consider the case where $\gamma = \frac{M}{N} = \{1,2,\cdots, K\}\frac{1}{K}$, while for non integer $K\gamma$, we will simply consider the result corresponding to $\lfloor K\gamma \rfloor$. Furthermore we will ignore the trivial case of $L\ge K(1-\gamma)$ which can be directly handled --- as shown in \cite{ShariatpanahiMK16it} --- to achieve the interference-free optimal $T = 1-\gamma$ corresponding to a sum-DoF $d_L(\gamma) = K$.

\section{Description of the scheme}\label{sec:Scheme}
We will present the scheme for all $K, \gamma,L$, first focusing here on the case where $L | K\gamma$ and $L|K$.

\paragraph{Grouping} We first split the $K$ users $k=1,2,\dots,K$ into $K'\defeq \frac{K}{L}$ disjoint groups
\[\mathcal{G}_g = \{  \ell K'+g, \ \ell = 0,1,\dots, L-1  \}, \ \text{for} \ g=1,2,\dots,K'  \]
of $|\mathcal{G}_g| = L$ users per group.
Our aim is to apply the algorithm of \cite{MN14} to serve $K'\gamma+1$ groups at a time, essentially treating each group as a single user.
Toward this, let \[\mathcal{T} =\{ \tau\in[K'] \ : \ |\tau| = K'\gamma\}\] be the set of
\begin{equation}\label{eq:cardinalityTau}
|\mathcal{T}| = \binom{K'}{K'\gamma}
\end{equation}
subsets in $[K']$, each of size $|\tau|=K'\gamma$, and let \[\mathcal{X} = \{ \chi\in[K'] \ : \ |\chi| = K'\gamma+1\}\] be the set of $|\mathcal{X}| = \binom{K'}{K'\gamma+1}$ subsets of size $|\chi|=K'\gamma+1$.
\paragraph{Subpacketization and caching}
We first split each file $W_n$ into $|\mathcal{T}|$ subfiles $\{W^\tau_n\}_{\tau\in \mathcal{T}}$, and then we assign each user $k\in \mathcal{G}_g$ the cache
\begin{equation} \label{eq:cachePlacement}
Z_k = Z_{\mathcal{G}_g} = \{ W_n^\tau \ : \ \forall \tau \ni g\}_{n=1}^N
\end{equation}
so that all users of the same group have an identical cache\footnote{A quick verification shows that \[|Z_{\mathcal{G}_g}| = N \frac{|\{\tau \in \mathcal{T}: g\in \tau\}|}{|\mathcal{T}|} = N \frac{\binom{K'-1}{K'\gamma-1}}{\binom{K'}{K'\gamma}} = N\gamma = M.\]}.

\paragraph{Transmission}
After notification of requests --- where each receiver $k$ requires file $W_{R_k}, \ R_k \in [N]$ --- the delivery consists of a sequential transmission $\{ \xv_\chi \}_{\chi\in \mathcal{X}}$ where each transmission takes the form
\begin{equation}\label{eq:transmission}
\xv_\chi = \sum_{g\in \chi} \sum_{k\in \mathcal{G}_g} W^{\chi \setminus g}_{R_k} \vv^{\mathcal{G}_g \setminus k}
\end{equation}
and where $\vv^{\mathcal{G}_g \setminus k}$ is an $L\times 1$ precoding vector that is designed to belong in the null space of the channel $\mathbf{H}^{\mathcal{G}_g \setminus k}$ between the $L$-antenna transmitter and the $L-1$ receivers in group $\mathcal{G}_g$ excluding receiver $k\in \mathcal{G}_g$.

\paragraph{Decoding --- `Caching-out' out-of-group messages} The corresponding received signal at user $k\in \mathcal{G}_g$ is then
\begin{equation}\label{eq:receivedSignal1}
\yv_{k,\chi} = \hv_{k}^{T} \xv_\chi + \wv_{k,\chi}
\end{equation}
and each such user $k\in \mathcal{G}_g$ can employ its cache to immediately remove all the files that are jointly undesired by its own group $\mathcal{G}_g$, i.e., receiver $k\in \mathcal{G}_g$ can remove \[\sum_{g'\in \chi\setminus g} \sum_{j\in \mathcal{G}_{g'}} W^{\chi \setminus g'}_{R_j} \vv^{\mathcal{G}_{g'} \setminus j}\] because $g'\neq g \in \chi$, i.e., because the cache of receiver $k$ includes all files $W^{\chi \setminus g'}_{R_j}$ in the above summation. This allows receiver $k$ to remove all files that are not of interest to its group $\mathcal{G}_g$, and thus to get
\begin{equation}\label{eq:receivedSignalAfterCachingOut}
\yv^{'}_{k,\chi} = \hv_{k}^{T} \bigl(   \sum_{j\in \mathcal{G}_g} W^{\chi \setminus g}_{R_j} \vv^{\mathcal{G}_g \setminus j} \bigr) + \wv_{k,\chi}.
\end{equation}
\paragraph{Nulling-out intra-group messages --- completion of decoding}
The interference for receiver $k$ now could only come from the files of the $L-1$ other users of its own group $\mathcal{G}_g$. This interference is averted directly by the ZF precoders (or any other DoF optimal precoder), and receiver $k$ can get the desired $W^{\chi \setminus g}_{R_k}$.

This is done instantaneously for all users $k\in \mathcal{G}_g$, and for all $g\in \chi$. Hence the scheme delivers to $K'\gamma+1$ groups at a time, thus to
\begin{equation}\label{eq:SchemeDoF}
d_{L}(\gamma) = L(K'\gamma+1) = K\gamma+L \end{equation} users at a time.
Then we do the same for another $\chi\in \mathcal{X}$. Along the different $\chi\in \mathcal{X}$, no subfile is repeated, and we can now conclude that the DoF is $K\gamma+L$, which as we saw (cf.~\eqref{eq:cardinalityTau}) is achieved here with subpacketization $S_L=\binom{K'}{K'\gamma}$.

\subsection{Example of scheme - alternate representation \label{sec:exampleOfScheme2}}
Let $K=50$, $L=5$ and $\gamma = M/N = 3/10$. We will achieve the sum-DoF of $d_\Sigma = L+G = L+K\gamma  = 5+15=20$, with a subpacketization of $120$.

First split the $K=50$ users into $K' = 10$ groups of $L=5$:
\begin{align*}
\mathcal{G}_1 = \{1,11,21,31,41\}, \dots , \mathcal{G}_{10}  = \{10,20,30,40,50\}.\end{align*}
Since $K'\gamma = 3$, we split each file $W_n$ into $|\mathcal{T}| = \binom{K'}{K'\gamma} = 120$ parts
\begin{align*}
 W_n = \{ W_n^{(1,2,3)},W_n^{(1,2,4)},\dots,W_n^{(1,3,4)},\dots, W_n^{(8,9,10)} \}
\end{align*}
and then fill the caches
\begin{align*}
Z_{\mathcal{G}_1} &= \{ W_n^{(1,2,3)},\!W_n^{(1,2,4)},\!\dots W_n^{(1,3,4)},\!\dots W_n^{(1,9,10)}\}_{n=1}^N\\
   & \vdots \\
Z_{\mathcal{G}_{10}} &= \{ W_n^{(1,2,10)},\!W_n^{(1,3,10)},\!\dots,W_n^{(2,3,10)},\!\dots W_n^{(8,9,10)}\}_{n=1}^N
\end{align*}
as described. We will serve $K'\gamma+1 = 4$ groups at a time. We treat the group clique $\chi = (1,2,3,4)$ first. Let
\[\wv_1^{(2,3,4)} = [W^{(2,3,4)}_{R_{1}},W^{(2,3,4)}_{R_{11}},W^{(2,3,4)}_{R_{21}},W^{(2,3,4)}_{R_{31}},W^{(2,3,4)}_{R_{41}}]^T\]
be the $L=5$ subfiles currently meant for the $5$ users in the first group. Similarly let $\wv_2^{(1,3,4)},\wv_3^{(1,2,4)},\wv_4^{(1,2,3)}$ be the $L$-length vectors of subfiles for the second, third and fourth groups respectively.
Then simply transmit
\begin{equation} \xv_{(1,2,3,4)} = (\mathbf{H}^{\mathcal{G}_1})^{-1} \wv_1^{(2,3,4)}  + (\mathbf{H}^{\mathcal{G}_2})^{-1} \wv_2^{(1,3,4)}  + (\mathbf{H}^{\mathcal{G}_3})^{-1} \wv_3^{(1,2,4)} +  (\mathbf{H}^{\mathcal{G}_4})^{-1}\wv_4^{(1,2,3)} \label{eq:trans1234}\end{equation}
where $(\mathbf{H}^{\mathcal{G}_g})^{-1}$ denotes the (normalized) inverse of the $L\times L$ channel to group $\mathcal{G}_g$.

Receiver 1 can immediately remove --- using its cache --- the last three summands in~\eqref{eq:trans1234}, and ZF can remove the unwanted $L-1=4$ elements from $\wv_1^{(2,3,4)}$. The achieved caching gain is $G = 15$, the sum-DoF is $d_L(\gamma) = 20$ (users at a time), and the subpacketization is $S_L=120$.

\section{Main results}\label{sec:mainResults}
We present the main results, first for the integer case where $L|K$ and $L|K\gamma$. The interpolation to all cases $K,L$ is easily handled using memory sharing, and as we note later on, does not result in substantial performance degradation. The details for this are handled in the appendix. We also try to highlight the practical relevance of some of these results, with examples.

We proceed with the main result.

\vspace{3pt}
\begin{theorem} \label{thm:SubpacketizationL}
In the cache-aided MISO BC with $L$ transmitting antennas and $K$ receiving users, the delay of $T = \frac{K(1-\gamma)}{L+K\gamma}$ and the corresponding sum-DoF $d_L(\gamma)=L+K\gamma$, can be achieved with subpacketization \[S_L = \binom{K/L}{K\gamma/L}.\]
\end{theorem}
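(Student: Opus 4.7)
The plan is to show that the scheme described in Section~\ref{sec:Scheme} meets all three claims of the theorem (subpacketization, delay, sum-DoF), by checking cache feasibility, decodability, delivery coverage, and finally the time accounting. The subpacketization claim is immediate from the construction: since $L \mid K\gamma$, the integer $K'\gamma = K\gamma/L$ is well-defined, and each file is split into $|\mathcal{T}| = \binom{K'}{K'\gamma} = \binom{K/L}{K\gamma/L}$ subfiles. So it suffices to argue that this splitting, together with the placement~\eqref{eq:cachePlacement} and the transmissions~\eqref{eq:transmission}, correctly delivers every requested subfile.

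First I would verify that~\eqref{eq:cachePlacement} respects the memory constraint, i.e., $|Z_k| = Mf$. This is a one-line counting check: for each file, the cache of group $\mathcal{G}_g$ stores exactly the subfiles indexed by $\tau \ni g$, whose fraction is $\binom{K'-1}{K'\gamma-1}/\binom{K'}{K'\gamma} = \gamma = M/N$. Next I would verify decodability at an arbitrary user $k \in \mathcal{G}_g$ for an arbitrary transmission $\xv_\chi$ with $g \in \chi$. This involves two sub-checks. The caching-out step: for every $g' \in \chi$ with $g' \neq g$, the subfile index $\chi \setminus g'$ contains $g$, so by~\eqref{eq:cachePlacement} all the summands $W_{R_j}^{\chi \setminus g'}\vv^{\mathcal{G}_{g'}\setminus j}$ with $j \in \mathcal{G}_{g'}$ are stored at user $k$, which can subtract them to obtain~\eqref{eq:receivedSignalAfterCachingOut}. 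The null-steering step: the precoders $\{\vv^{\mathcal{G}_g \setminus j}\}_{j \in \mathcal{G}_g}$ live in the null spaces of the $(L-1)\times L$ channel matrices $\mathbf{H}^{\mathcal{G}_g \setminus j}$, and for generic fading these $L$ directions are linearly independent at user $k$, so a ZF (or any DoF-optimal) receiver recovers the desired $W_{R_k}^{\chi \setminus g}$ interference-free in the high-SNR regime.

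The third step is delivery coverage. For user $k \in \mathcal{G}_g$, the subfiles that~\eqref{eq:cachePlacement} leaves uncached are precisely those with $g \notin \tau$; each such $\tau$ arises as $\chi \setminus g$ for a unique $\chi = \tau \cup \{g\} \in \mathcal{X}$, so every missing subfile is delivered exactly once over the $|\mathcal{X}| = \binom{K'}{K'\gamma+1}$ transmissions. Finally I would tally the delay: each transmission carries one subfile per served user, hence occupies $1/S_L$ normalized time slots, giving
\[
T \;=\; \frac{\binom{K'}{K'\gamma+1}}{\binom{K'}{K'\gamma}} \;=\; \frac{K'(1-\gamma)}{K'\gamma+1} \;=\; \frac{K(1-\gamma)}{L+K\gamma},
\]
and therefore the sum-DoF $d_L(\gamma) = K(1-\gamma)/T = L + K\gamma$, as claimed.

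No step is computationally heavy, but the subtle point that deserves careful writing is the caching-out argument: it rests on the elementary but crucial set identity $g \in \chi \setminus g'$ whenever $g,g' \in \chi$ with $g \neq g'$, which is exactly what lets the grouped construction mimic an MN-style clique one level up, with each group acting as a virtual super-user whose intra-group interference is then shed by the $L$ transmit antennas.
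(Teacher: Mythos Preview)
Your proposal is correct and follows essentially the same approach as the paper: both rely on the scheme of Section~\ref{sec:Scheme}, with the paper's proof simply pointing back to~\eqref{eq:cardinalityTau} and~\eqref{eq:SchemeDoF}, while you spell out explicitly the cache-size check, the caching-out and ZF decodability, the delivery coverage, and the delay tally that the scheme description already establishes. The crucial set identity $g \in \chi \setminus g'$ you highlight is exactly what the paper uses (implicitly) in the passage following~\eqref{eq:receivedSignal1}.
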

\vspace{3pt}
\begin{proof}
The proof of this is direct from the description of the scheme. Specifically~\eqref{eq:cardinalityTau} tells us that the subpacketization is $\binom{K'}{K'\gamma}$ where $K' = K/L$, while~\eqref{eq:SchemeDoF} tells us that the DoF is $d_{L}(\gamma) = L(K'\gamma+1) = K\gamma+L$.
\end{proof}

\subsection{Effective gains and multiplicative boost of effective DoF \label{sec:ResultEffectiveGains}}
We recall that in the absence of subpacketization constraints, adding extra transmitting antennas, takes us from a theoretical sum-DoF $d_1 = 1+K\gamma$ to $d_L = L+K\gamma$ (cf.~\cite{ShariatpanahiMK16it}), leaving the theoretical caching gain unaffected, and adding $d_L(\gamma) - d_1(\gamma) = L-1$ DoF. For example, adding one extra antenna (going from $L=1$ to $L=2$) simply allows us to add one extra served user per second per hertz. What we will see here though is that, when subpacketization is taken into consideration, adding extra transmitting antennas (or later, adding extra transmitter-side caching) can have a much more powerful, multiplicative impact on the effective gains.

Recall from \eqref{eq:barK} that for $L=1$, the subpacketization takes the form $S_1 = \binom{K}{K\gamma}$, which -- as we briefly argued before -- means that having a maximum allowable subpacketization $S_{max}$, limits the number of users we can encode over, from $K$ to a smaller $\bar{K}_{1} \defeq  \arg\max\limits_{K^o \leq K} \left\{ \binom{K^o}{K^o\gamma} \leq S_{max} \right\}$. On the other hand, in the $L$ antenna case, the reduced subpacketization cost $S_L = \binom{\frac{K}{L}}{\frac{K\gamma}{L}}$ allows us, for the same constraint $S_{max}$, to encode over
\begin{equation} \label{eq:barKL}
\bar{K}_L  \defeq \arg\max_{K^o \leq K} \left\{ \binom{\frac{K^o}{L}}{\frac{K^o\gamma}{L}} \leq S_{max}\right\} = \min \{L\cdot \bar{K}_{1}, K\}
\end{equation}
users, just because the transition from $S_1$ to $S_L$ reflects a simple substitution of $K$ by $K/L$. Going from 1 to $L$ antennas, allows us to encode over $L$ times as many users (up to $K$), which in turn offers $L$ times more caching gain
\[\bar{G}_L = \min\{L\cdot \bar{G}_1, G\}\]
up to the theoretical $G = K\gamma$. Specifically if $\binom{\frac{K}{L}}{\frac{K\gamma}{L}} \leq S_{max}$ then $\bar{G}_L = G$ (corresponding to a multiplicative boost of $\frac{\bar{G}_L}{\bar{G}_1} = \frac{G}{\bar{G}_1}$), else the effective gain and the effective sum-DoF both experience a multiplicative increase by a factor of exactly $L$.
For completeness this is represented in the following corollary, which ignores for now integer rounding-off effects. The corollary follows directly from the above.

\vspace{3pt}
\begin{corollary}\label{cor:multiBoostMulticasting}
Under a maximum allowable subpacketization $S_{max}$, the multi-antenna effective caching gain and DoF take the form
\begin{align}\label{eq:MulticastingGains}
\bar{G}_L &= \min\{ L\cdot \bar{G}_1, G=K\gamma\}\\
\bar{d}_L & = \min\{ L\cdot \bar{d}_1, d_L=L+K\gamma\}
\end{align} which means that with extra antennas, the (single-antenna) effective DoF $\bar{d}_1$ is either increased by a multiplicative factor of $L$, or it reaches the theoretical (unconstrained) DoF $d_L=L+K\gamma$.
\end{corollary}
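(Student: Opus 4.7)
The plan is to obtain both identities as an immediate corollary of Theorem~\ref{thm:SubpacketizationL}, combined with the definition of the effective operating regime under a subpacketization budget $S_{max}$. First I would observe that Theorem~\ref{thm:SubpacketizationL} gives, for any $K^o \leq K$ with $L | K^o$ and $L | K^o\gamma$, a scheme that achieves the sum-DoF $L + K^o\gamma$ with subpacketization $\binom{K^o/L}{K^o\gamma/L}$. The effective number of users $\bar{K}_L$ is therefore (by definition~\eqref{eq:barKL}) the largest such $K^o$ for which this subpacketization does not exceed $S_{max}$.

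The core step is the identification $\bar{K}_L = \min\{L\cdot \bar{K}_1,K\}$ already stated in~\eqref{eq:barKL}. I would establish this by the substitution that motivates the entire paper: the inequality $\binom{K^o/L}{K^o\gamma/L}\leq S_{max}$ is literally the single-antenna inequality $\binom{\tilde K}{\tilde K \gamma}\leq S_{max}$ after the renaming $\tilde K \defeq K^o/L$. Hence the largest feasible $\tilde K$ is exactly $\bar{K}_1$, which gives $K^o = L\tilde K \leq L\bar{K}_1$; the independent cap $K^o\leq K$ imposed by the network size yields the claimed minimum. (A brief remark is warranted that, when $L\bar{K}_1 \geq K$, one applies Theorem~\ref{thm:SubpacketizationL} to all $K$ users directly, so the cap is indeed attained.)

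From here the two displayed identities fall out by arithmetic. The effective caching gain is $\bar{G}_L = \bar{d}_L - L = \bar{K}_L \gamma$, so
\begin{equation*}
\bar{G}_L \;=\; \min\{L\bar{K}_1,K\}\,\gamma \;=\; \min\{L\cdot \bar{K}_1\gamma,\, K\gamma\} \;=\; \min\{L\cdot \bar{G}_1,\, G\},
\end{equation*}
using $\bar{G}_1 = \bar{K}_1\gamma$ and $G = K\gamma$. Adding $L$ to both sides of the minimum and using $\bar{d}_1 = 1+\bar{G}_1$ and $d_L = L+G$ gives
\begin{equation*}
\bar{d}_L \;=\; L + \min\{L\bar{G}_1,\,G\} \;=\; \min\{L(1+\bar{G}_1),\, L+G\} \;=\; \min\{L\cdot \bar{d}_1,\, d_L\},
\end{equation*}
which is the second identity.

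There is no real obstacle in the argument; the only subtlety worth being careful about is the divisibility hypothesis $L|K$ and $L|K\gamma$ needed to invoke Theorem~\ref{thm:SubpacketizationL} on a sub-population of $\bar{K}_L$ users. Since the corollary is stated in the same regime as the theorem (and the paper defers the general case to the appendix via memory sharing, with arbitrarily small loss), I would simply inherit this assumption. I would also note, for completeness, that the multiplicative interpretation in the concluding sentence is just the dichotomy produced by the minimum: whenever the constraint $\binom{K/L}{K\gamma/L}\leq S_{max}$ is met the second argument is active and the theoretical $d_L$ is achieved, while otherwise the first argument is active and $\bar{d}_L$ is exactly $L$ times $\bar{d}_1$.
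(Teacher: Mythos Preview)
Your proposal is correct and follows essentially the same approach as the paper: the paper derives the corollary from the substitution $K^o \mapsto K^o/L$ in the subpacketization constraint, obtaining $\bar{K}_L = \min\{L\bar{K}_1,K\}$ in~\eqref{eq:barKL}, and then reads off $\bar{G}_L$ and $\bar{d}_L$ exactly as you do. Your write-up is in fact slightly more explicit about the arithmetic step from $\bar{K}_L$ to $\bar{d}_L$, but there is no substantive difference in method.
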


\vspace{3pt}

\begin{example}[Multiplicative boost of effective DoF]
In an $L$-antenna MISO BC, let $\gamma = 1/20$ and $K=1280$, corresponding to a theoretical caching gain of $G = K\gamma = 64$ and a theoretical sum-DoF of $d_L = L+G = L+64$. When $L=1$ then $d_1 = 65$, when $L=2$ then the sum DoF is $66$, and so on.
If the subpacketization limit $S_{max}$ was infinite, then of course the effective and theoretical caching gains would match, as we could get $\bar{G}_1 = \bar{G}_L = G= K\gamma = 64$ even when $L=1$, which would imply no multiplicative boost from having many antennas since $\frac{\bar{G}_L}{\bar{G}_1} = 1$.
If instead, the subpacketization limit was a lesser but still astronomical $S_{max} = \binom{K/2}{K\gamma/2} = \binom{640}{32}  $ then in the presence of a single antenna, we would encode over 640 users to get a constrained gain of $\bar{G}_1=640\frac{1}{20}=32$ which means that, irrespective of the number of antennas $L\geq 2$, the multiplicative boost would be $\frac{\bar{G}_L}{\bar{G}_1} = 2$.

Let us now consider a more reasonable $S_{max} = \binom{80}{4} \approx  1.5 \cdot 10^6$, and recall that for $L=1$, we could encode over only  $\bar{K} = 80$ users to get an effective caching gain $\bar{G}_1  = \bar{K} \gamma = 4$ treating a total of $\bar{d}_1 = 1+\bar{G}_1 = 5$ users at a time.
Assume now that we increased the number of transmitting antennas to $L=2$, in which case we would encode over $\bar{K}_L = L\cdot \bar{K}_1 = 2\cdot 80 = 160$ users which of course guarantees that $\binom{\frac{\bar{K}_L}{L}}{\frac{\bar{K}_L\gamma }{L}} = \binom{80}{4} \leq S_{max}$, and which yields a gain of $\bar{G}_L  = \bar{K}_L \gamma = 160 \frac{1}{20} = 8$, thus treating a total of $\bar{d}_L = L+\bar{G}_L  = L(1+\bar{G}_1) = 10$ users at a time, thus doubling the number of users served at a time, from 5 to 10.
Similarly for $L=4$, then $\bar{K}_L = 320$, which gives $\bar{G}_L  = 16$ and $\bar{d}_L = 20$, up until $L=16$ for which $\bar{K}_L = 16\cdot\bar{K}_1= 1280$, reaching the theoretical optimal $\bar{G}_L  = 64$, $\bar{d}_L = 80$, and the corresponding $16$-fold multiplicative DoF boost.

\end{example}

\vspace{3pt}
\begin{remark}
What we saw is that this $L$-fold multiplicative DoF boost stays into effect as long as $\binom{\frac{K}{L}}{\frac{K\gamma }{L}} \geq S_{max}$, so in essence it stays into effect as long as subpacketization remains an issue.
\end{remark}
\vspace{3pt}

The following corollary bounds the derived effective caching gain $\bar{G}_L$.

\vspace{3pt}
\begin{corollary} \label{cor:EffectiveGain}
Given a maximum allowable subpacketization $S_{max}$, the effective caching gain of the presented scheme is bounded as
\begin{equation}\label{eq:EffGainMN}
\bar{G}_L \geq \min\{ \ L\cdot \frac{\log S_{max}}{1+\log(\frac{1}{\gamma})},K\gamma \ \}.\end{equation} \end{corollary}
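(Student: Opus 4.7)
The plan is to combine Corollary~\ref{cor:multiBoostMulticasting} with the lower bound on $\bar{G}_1$ already established in~\eqref{barGc}. Corollary~\ref{cor:multiBoostMulticasting} states that the multi-antenna effective caching gain satisfies $\bar{G}_L = \min\{L\cdot \bar{G}_1,\; K\gamma\}$, reflecting the $L$-fold multiplicative boost up to the theoretical ceiling. The chain~\eqref{barGc} in turn supplies $\bar{G}_1 \geq \log S_{max}/(1+\log(1/\gamma))$. With these two ingredients in hand, the conclusion follows by a direct substitution.

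First I would import the lower bound on $\bar{G}_1$ from~\eqref{barGc}. As the excerpt notes, this inequality comes from the elementary estimate $\binom{n}{k} \leq (en/k)^k$ applied to the maximal $\bar{K}_{1}$ saturating $\binom{\bar{K}_{1}}{\bar{K}_{1}\gamma} \leq S_{max}$, which upon taking logarithms yields precisely the stated lower bound $\bar{G}_1 \geq \log S_{max}/(1+\log(1/\gamma))$. I would not re-derive this; I would simply cite~\eqref{barGc} as the source.

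Next, invoking Corollary~\ref{cor:multiBoostMulticasting} together with the monotonicity of $\min$ in its first argument, I would substitute the above lower bound to obtain
\[\bar{G}_L \;=\; \min\{L\cdot \bar{G}_1,\; K\gamma\} \;\geq\; \min\left\{L\cdot \frac{\log S_{max}}{1+\log(1/\gamma)},\; K\gamma\right\},\]
which is exactly the claim.

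I do not foresee any substantive obstacle: the argument is essentially a one-line chaining of two prior results, without need for any new combinatorial or information-theoretic ingredient. The only mild caveat worth flagging is the implicit continuous relaxation of $\bar{K}_{1}$ (and hence of $\bar{K}_L$) inherited from~\eqref{barGc} and Corollary~\ref{cor:multiBoostMulticasting}; as the text explicitly notes, integer rounding effects are ignored at this stage, so this convention simply propagates through the chained bound and no further adjustment is necessary.
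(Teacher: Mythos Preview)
Your proposal is correct and is essentially the same argument as the paper's: both rest on the Stirling-type bound $\binom{n}{n\gamma}\le (e/\gamma)^{n\gamma}$, with the only difference being that the paper applies it directly to $S_L=\binom{K/L}{K\gamma/L}$ while you route through the already-stated single-antenna bound~\eqref{barGc} together with Corollary~\ref{cor:multiBoostMulticasting}. The mathematical content is identical; your version simply cites prior results instead of redoing the one-line computation.
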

\vspace{3pt}

\begin{proof}
This follows directly from Sterling's approximation which bounds subpacketization as $S_L = \binom{K'}{K'\gamma} \leq \left( \frac{e}{\gamma}\right)^{K'\gamma} =  \left( \frac{e}{\gamma}\right)^{\frac{G}{L}}$ which directly implies that $\bar{G}_L \geq L\cdot \frac{\log S_{max}}{1+\log(\frac{1}{\gamma})}$ (up to the theoretical gain $G = K\gamma$).
\end{proof}
\vspace{3pt}

\begin{figure}[ht!]
  \centering
\includegraphics[width=0.4\columnwidth, angle=270]{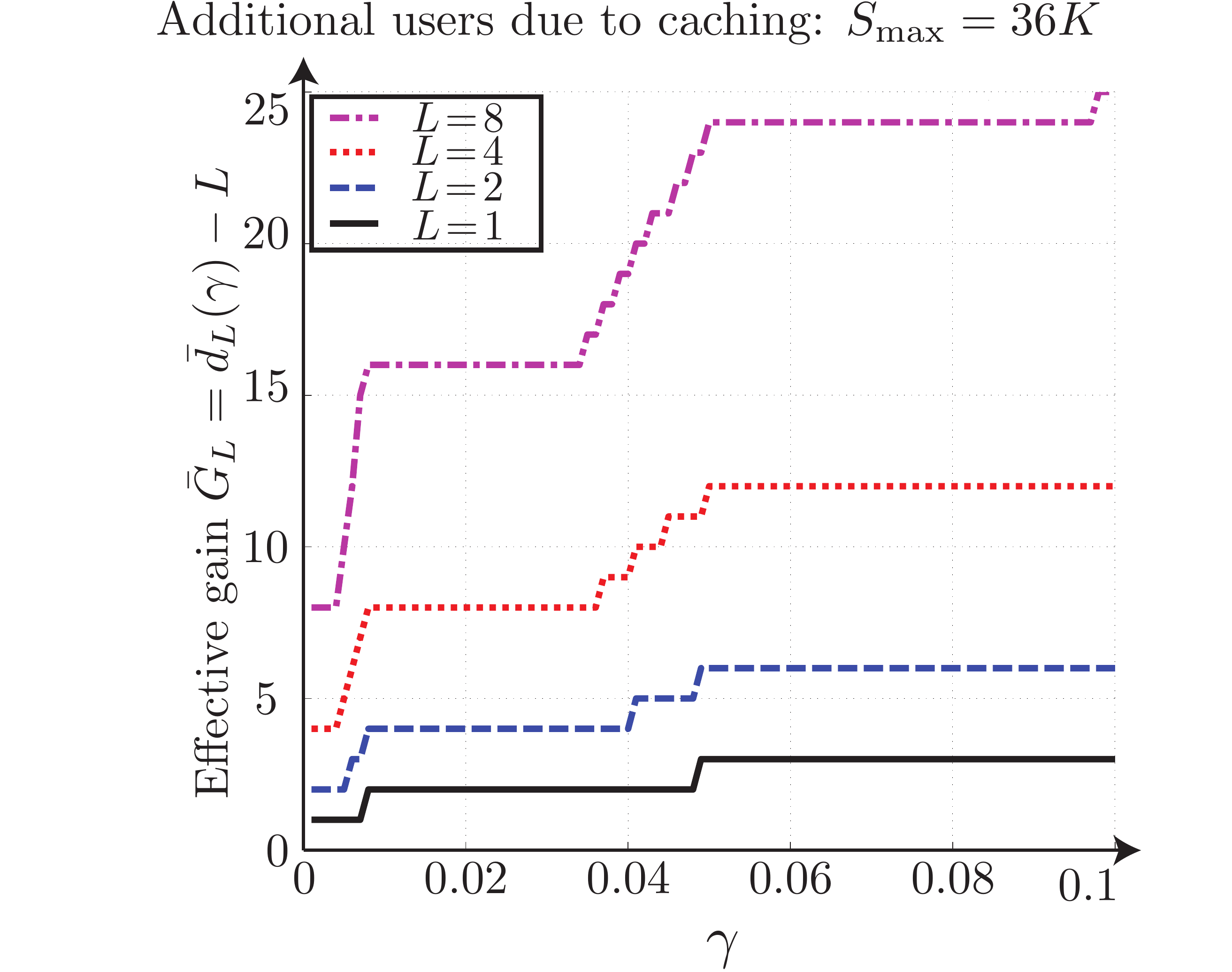}\\
\includegraphics[width=0.4\columnwidth, angle=270]{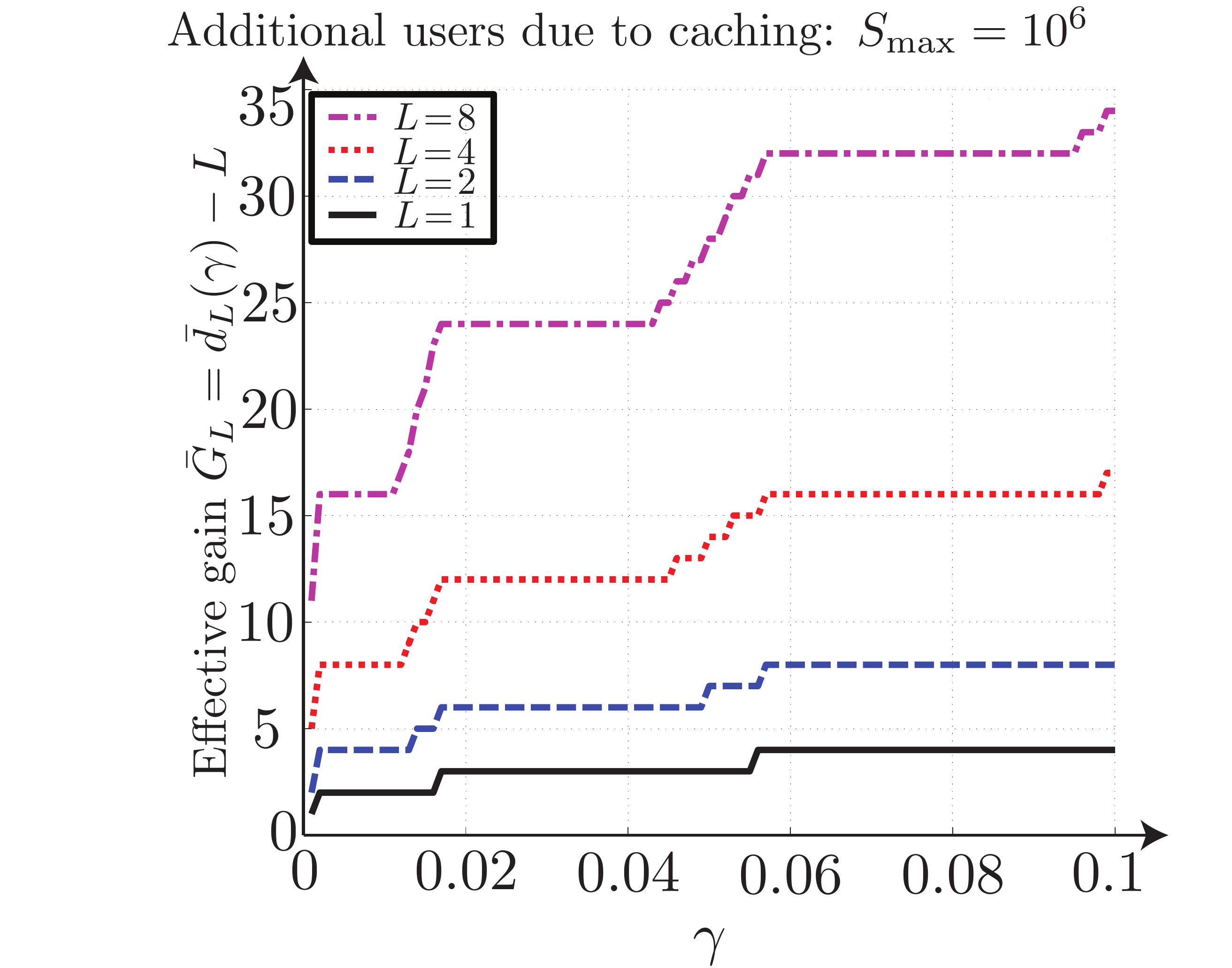}\\
\caption{
Maximum achievable effective caching gain $\bar{G}_L = d_L(\gamma)-L$ (maximized over all possible $K$), achieved by the new scheme for different $L$, under subpacketization constraint $S_{max} = 3.6\cdot 10^4$ (above) and $S_{max} = 10^6$ (below).}
\label{fig:EffectiveGainsComparison}
\end{figure}

\paragraph*{Practical implication - Making small caches relevant}
Another benefit of the reduced subpacketization here, is the resulting exponential increase in the range of cache sizes that can achieve a given target gain.
While in theory, a small $\gamma$ does not necessarily preclude higher caching gains because we could conceivably compensate by increasing the number of users we encode over, such an increase would increase subpacketization thus again precluding high gains (subpacketization limits would not allow for such an increase in the number of users we encode over). Specifically we recall (cf.~\eqref{eq:approximationSmn}) that when $L=1$ then the subpacketization is bounded as $S_1  \geq \left( \frac{1}{\gamma}\right)^{G}$, which means that to meet a subpacketization constraint $S_{\max}$ and a target caching gain of $G$, we need
\begin{equation} \label{eq:gammaThreshold}
\gamma \geq \left(S_{\max}\right)^{-1/G}.
\end{equation}
On the other hand, the reduced subpacketization $S_L  \geq \left( \frac{1}{\gamma}\right)^{\frac{1}{L}G}$ in the $L$ antenna case (cf.~\eqref{eq:approximationSmn}, after substituting $K$ by $K/L$), can allow for the same caching gain $G $ (given sufficiently many users to encode over) with only \begin{equation} \label{eq:gammaThreshold2}
\gamma \geq \left(\left(S_{\max}\right)^{-1/G}\right)^L.
\end{equation}
This exponential reduction in the minimum applicable $\gamma$, matches well the spirit of exploiting caches at the very periphery of the network, where we are expected to find relatively small but abundantly many caches.

\subsection{Subpacketization cost of complementing the multiplexing gains}

The following corollary highlights that, in an $L$-antenna MISO BC system, the subpacketization cost is not determined by $K$ or $L=\lambda K$, nor by the number of extra users $G$ we wish to add due to caching, but rather by the ratio $x = \frac{d_L(\gamma)}{d_L(\gamma=0)}$ between the DoF and the multiplexing gain.

\vspace{3pt}
\begin{corollary} \label{cor:multiplicativeInput}
In our $L$-antenna MISO BC setting, a subpacketization of \[S = \binom{1/\lambda}{x-1} = \binom{\frac{1}{\lambda}}{\frac{\gamma}{\lambda}}\] can yield a DoF that is $x$ times the multiplexing gain.
\end{corollary}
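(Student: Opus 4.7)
The plan is to derive this as a direct relabeling of Theorem~\ref{thm:SubpacketizationL}, since the corollary is essentially a change of variables that recasts the subpacketization expression $\binom{K/L}{K\gamma/L}$ in terms of the two DoF-related ratios $\lambda \defeq L/K$ (the fraction of transmit dimensions per user) and $x \defeq d_L(\gamma)/d_L(\gamma=0)$ (the DoF boost over the pure multiplexing gain).

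First, I would invoke Theorem~\ref{thm:SubpacketizationL} to obtain $d_L(\gamma) = L+K\gamma$ with subpacketization $\binom{K/L}{K\gamma/L}$, valid under $L|K$ and $L|K\gamma$. Next, using $d_L(\gamma=0) = L$, I would compute
\[
x \ = \ \frac{d_L(\gamma)}{d_L(\gamma=0)} \ = \ \frac{L+K\gamma}{L} \ = \ 1 + \frac{K\gamma}{L} \ = \ 1 + \frac{\gamma}{\lambda},
\]
so that $x-1 = \gamma/\lambda = K\gamma/L$. Finally, substituting $\lambda = L/K$ gives $1/\lambda = K/L$, and the subpacketization expression from Theorem~\ref{thm:SubpacketizationL} becomes
\[
\binom{K/L}{K\gamma/L} \ = \ \binom{1/\lambda}{\gamma/\lambda} \ = \ \binom{1/\lambda}{x-1},
\]
which is exactly the form claimed in the corollary. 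The associated DoF is $L + K\gamma = xL$, i.e., $x$ times the multiplexing gain $L$, as required.

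There is no genuine obstacle here: the corollary is an algebraic rewriting designed to highlight that the subpacketization cost depends only on the inverse fractional dimensionality $1/\lambda$ and the caching-induced multiplicative DoF amplification $x$, rather than separately on $K$, $L$, or $G = K\gamma$. The only subtlety worth a brief remark is the integrality of $1/\lambda$ and $\gamma/\lambda$, which is inherited from the standing assumptions $L|K$ and $L|K\gamma$ of Theorem~\ref{thm:SubpacketizationL}; the general (non-integer) case would be handled by the memory-sharing argument deferred to the appendix, and therefore need not enter this proof.
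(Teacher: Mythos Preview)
Your proposal is correct and follows essentially the same approach as the paper: both arguments start from Theorem~\ref{thm:SubpacketizationL}, set $d_L(\gamma)=xL$ so that $K\gamma=L(x-1)$ and $\gamma=\lambda(x-1)$, and then substitute $K/L=1/\lambda$ and $K\gamma/L=\gamma/\lambda=x-1$ into $S_L=\binom{K/L}{K\gamma/L}$. Your added remark on the integrality being inherited from the standing assumptions $L\mid K$ and $L\mid K\gamma$ is consistent with the paper's handling of the non-integer case via memory sharing in the appendix.
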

\vspace{3pt}
\begin{proof}
The DoF increase from $d_L(\gamma=0) = L$ to $d_L(\gamma) = L+K\gamma= x\cdot L, \ \ x\in\mathbb{Z}^{+}$, implies that $K\gamma = L(x-1)$ and that $\gamma = \lambda(x-1)$, which means that the corresponding subpacketization $S_L = \binom{K/L}{K\gamma/L}$ now takes the form $S = \binom{\frac{1}{\lambda}}{\frac{\gamma}{\lambda}} = \binom{1/\lambda}{x-1}$.
\end{proof}
\vspace{3pt}

\begin{remark}
This generalization here, from the known single-antenna case where $\lambda = 1/K, d_1(\gamma) = x = K\gamma+1$, to the $L$ antenna case, is nicely captured by Sterling's approximation which --- for $d(\gamma) = xL$ --- remains fixed at $S_L\in\left[ \left(\frac{1}{\gamma}\right)^{x-1}, \left(\frac{e}{\gamma}\right)^{x-1}\right]$. The result is simply a reflection of the fact that the same subpacketization cost of treating $K'\gamma+1$ users at a time ($K' = K/L$) in the single-antenna case, now guarantees the treatment of $K'\gamma+1$ groups at a time.
\end{remark}
\vspace{3pt}
\begin{example}
From the above we see that normalized cache sizes $\gamma = \lambda(x-1) = \lambda$ and a subpacketization $S_L = 1/\lambda = K/L$, suffice to double the total cache-free DoF ($x=2$), while $\gamma = 2\lambda$ and $S_L = \binom{1/\lambda}{2} < \frac{1}{2\lambda^2}$ can triple the number of users served at a time, from $L$ to $3L$. Hence for example in a cell of $K$ users served by a multi-antenna base-station that provides $d_L(\gamma=0)/K = L/K = \lambda = 1/30$ cache-free DoF per user, having $\gamma = \frac{xL-L}{K} = \lambda(x-1) = 2\lambda = 2/30$ and $S_{max} =\binom{1/\lambda}{x-1} = \binom{30}{2} = 435$ would allow caching to triple the number of users served at a time ($x=3$).
\end{example}
\vspace{3pt}

\subsection{Subpacketization scaling and algorithmic simplicity from matching multiplexing gain with caching gain}

Directly from the previous corollary, we also have the following.
\vspace{3pt}
\begin{corollary} \label{cor:multiplicativeInput2}
In asymptotic terms, as long as $L$ scales with the caching gain $K\gamma$, the entire sum-DoF $L+K\gamma$ is achievable with constant subpacketization.
\end{corollary}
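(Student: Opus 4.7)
The plan is to apply Theorem~\ref{thm:SubpacketizationL} directly and then inspect the resulting binomial under the stated scaling. Theorem~\ref{thm:SubpacketizationL} already tells us that the full cumulative DoF $d_L(\gamma) = L + K\gamma$ is achievable with subpacketization $S_L = \binom{K/L}{K\gamma/L}$, so the corollary reduces to a purely algebraic check: that this binomial remains bounded by a constant (independent of $K$) whenever $L$ is proportional to $K\gamma$.

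First I would make the hypothesis explicit by writing $L = \alpha K\gamma$ for some fixed constant $\alpha>0$ (this is what ``$L$ scales with $K\gamma$'' means). Substituting into the subpacketization formula, the lower index becomes $K\gamma/L = 1/\alpha$, while the upper index becomes $K/L = (K\gamma/L)\cdot(1/\gamma) = 1/(\alpha\gamma)$. Under the standard coded-caching convention that $\gamma$ is a fixed system parameter (independent of $K$), both indices are constants in $K$, so
\[
S_L \;=\; \binom{K/L}{K\gamma/L} \;=\; \binom{1/(\alpha\gamma)}{1/\alpha},
\]
a quantity depending only on $\alpha$ and $\gamma$. This is exactly the constant-subpacketization claim, and it is also consistent with Corollary~\ref{cor:multiplicativeInput} specialized to $\lambda = L/K = \alpha\gamma$ and $x = 1+1/\alpha$, which already identified $\binom{1/\lambda}{x-1}$ as the relevant quantity.

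The only formal loose end is the integer-divisibility assumption $L \mid K$ and $L \mid K\gamma$ carried over from Theorem~\ref{thm:SubpacketizationL}. Since the statement is asymptotic, I would handle this either by restricting $K$ to a subsequence on which both divisibilities hold (for instance, $K$ along multiples of $\mathrm{lcm}(L,\,L/\gamma)$), or by invoking the non-integer variant of the scheme developed in Appendix~\ref{sec:nonInteger}, which is stated to incur only vanishing overhead; in either case the subpacketization remains uniformly bounded in $K$ and the conclusion follows. I do not anticipate any genuine obstacle: the entire content of the corollary is the algebraic observation that once $L$ is proportional to $K\gamma$, the ratios $K/L$ and $K\gamma/L$ both freeze at constants, and consequently $\binom{K/L}{K\gamma/L}$ ceases to depend on $K$.
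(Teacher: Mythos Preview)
Your proposal is correct and follows essentially the same approach as the paper: both arguments simply substitute the proportionality $L = \alpha K\gamma$ (the paper uses $\alpha = 1/q$ for fixed $q\in\mathbb{Z}^{+}$) into the subpacketization formula from Theorem~\ref{thm:SubpacketizationL}/Corollary~\ref{cor:multiplicativeInput} and observe that $\binom{K/L}{K\gamma/L}$ becomes a constant independent of $K$. Your treatment is marginally more detailed in handling the integer constraints, but the core idea is identical.
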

\vspace{3pt}
\begin{proof}
As we have seen in the previous corollary, for $L = \frac{1}{q} K\gamma$ for some fixed $q\in \mathbb{Z}^{+}$, then the subpacketization is $S = \binom{1/\lambda}{q}$ and it is independent of $K,L$.
\end{proof}

An additional corollary is the following.
\vspace{3pt}
\begin{corollary} \label{cor:LisKgamma}
For $L=K\gamma$, the aforementioned DoF $L+K\gamma$ can be achieved with subpacketization $$S_L = \frac{1}{\gamma} = \frac{K}{L}.$$
\end{corollary}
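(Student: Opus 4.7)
The plan is to specialize Theorem 1 to the case $L = K\gamma$; since the underlying scheme and its correctness have already been established by the theorem, the corollary should collapse to a one-line evaluation of the binomial coefficient appearing in the subpacketization formula.

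First, I would invoke Theorem 1, which guarantees the DoF $d_L(\gamma) = L + K\gamma$ with subpacketization $S_L = \binom{K/L}{K\gamma/L}$. Substituting $L = K\gamma$ forces the lower argument to satisfy $K\gamma/L = 1$, whereupon $S_L = \binom{K/L}{1} = K/L$, giving the first claimed equality. For the second equality, I would note that $L = K\gamma$ is equivalent to $\gamma = L/K$, so $1/\gamma = K/L$, matching the stated expression $S_L = 1/\gamma = K/L$.

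I expect no substantive obstacle, since this is purely an arithmetic specialization of the previous theorem. The only point worth verifying is that the integer-divisibility hypotheses of Theorem 1, namely $L \mid K$ and $L \mid K\gamma$, are compatible with the substitution. The second condition becomes $L \mid L$ and is trivial, while the first reduces to $K\gamma \mid K$, i.e., $1/\gamma \in \mathbb{Z}^{+}$. In the non-integer regime one falls back on the memory-sharing argument described in the appendix, which, as the paper emphasizes, incurs only negligible loss; hence the clean expression $S_L = K/L$ remains an excellent approximation of the subpacketization cost whenever $L$ is chosen to match the theoretical caching gain $K\gamma$.
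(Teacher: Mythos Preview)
Your proposal is correct and matches the paper's approach: the paper simply says ``The proof is direct from the above,'' referring to the subpacketization formula $S_L=\binom{K/L}{K\gamma/L}$ (equivalently $\binom{1/\lambda}{x-1}$ from the preceding corollary), which you specialize to $L=K\gamma$ exactly as intended. The divisibility remark you add is a reasonable extra observation but is not needed for the corollary as stated.
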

\vspace{3pt}
The proof is direct from the above.

The following example highlights the utility of matching $K\gamma$ with $L$, and focuses on smaller cache sizes.
\vspace{3pt}
\begin{example}
In a BC with $\gamma=1/100$ and $L=1$, allowing for caching gains of $G = K\gamma = 10$ (additional users due to caching), would require $S_1 = \binom{1000}{10} > 10^{23}$ so in practice coded caching could not offer such gains. In the $L=10$ antenna case, this caching gain comes with subpacketization of only $S_L = K/L = 100$.
\end{example}
\vspace{3pt}

\subsection{Transmitter cooperation for boosting coded caching}
Until now we have explored the effect of having $L$ antennas at the transmitter. An identical effect will appear if instead of a single $L$-antenna transmitter, we consider $K_T$ independent single-antenna transmitters, each equipped with a cache of normalized cache size of $\gamma_T\geq \frac{1}{K_T}$ (as before, there are $K$ fully-interfering single-antenna receivers with normalized cache size $\gamma$). This setting corresponds to the $K_T\times K$ cache-aided interference scenario of~\cite{NaderializadehMA17a}, for which --- as discussed in Section~\ref{sec:IntroMultiAntenna} --- the (unconstrained) achieved `one-shot linear' sum-DoF takes the form $K_T\gamma_T+K\gamma$.

\vspace{3pt}
\begin{corollary} \label{cor:ICversion}
In the $K_T\times K$ cache-aided interference scenario with normalized cache sizes $\gamma_T,\gamma$, the sum-DoF of $K_T\gamma_T+K\gamma$, can be achieved with subpacketization of $$S_{K_T\gamma_T} = \binom{\frac{K}{K_T\gamma_T}}{\frac{K\gamma}{K_T\gamma_T}}.$$
\end{corollary}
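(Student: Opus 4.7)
The plan is to reduce the problem to Theorem~\ref{thm:SubpacketizationL} by observing that, in the cache-aided interference channel, any set of $L' \defeq K_T\gamma_T$ single-antenna transmitters that jointly store a common subfile can cooperatively act as an $L'$-antenna ``virtual'' transmitter for that subfile, capable of zero-forcing $L'-1$ undesired intra-group streams. Setting $L'$ in place of $L$ then makes the entire construction of Section~\ref{sec:Scheme} go through verbatim.

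Concretely, I would first partition the $K$ receivers into $K' \defeq K/L' = K/(K_T\gamma_T)$ groups of $L'$ users each, and perform the same group-level MN placement as in~\eqref{eq:cachePlacement}: split each file into $\binom{K'}{K'\gamma}$ subfiles $W_n^\tau$, $\tau \in \binom{[K']}{K'\gamma}$, and cache $W_n^\tau$ at every receiver whose group index lies in $\tau$. This is already the only subpacketization step; it yields exactly $S_{K_T\gamma_T} = \binom{K/(K_T\gamma_T)}{K\gamma/(K_T\gamma_T)}$ subpackets per file, and it keeps each receiver's cache at the prescribed normalized size $\gamma$, by the same count as in the footnote below~\eqref{eq:cachePlacement}.

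Next I would specify the transmitter-side placement so that every subfile $W_n^\tau$ is replicated at some fixed cooperative cluster of $L' = K_T\gamma_T$ transmitters, without introducing any new subpacketization. The cleanest way is to partition the $K_T$ transmitters into $1/\gamma_T$ disjoint clusters of $L'$ transmitters each, and then assign the $N\binom{K'}{K'\gamma}$ subfiles across these clusters in a balanced (e.g.\ round-robin) manner, storing each assigned subfile in full at all $L'$ transmitters of its cluster. A straightforward counting check shows that the per-transmitter load is $\gamma_T$ of the library, as required. Delivery then proceeds exactly as in~\eqref{eq:transmission}: for each $\chi \in \binom{[K']}{K'\gamma+1}$ the active cluster holding the subfile $W_{R_k}^{\chi \setminus g}$ designs and transmits a ZF precoder $\vv^{\mathcal{G}_g \setminus k}$ that nulls the intra-group interferers, and each receiver cancels the ``out-of-group'' terms from its cache as in~\eqref{eq:receivedSignalAfterCachingOut}. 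The sum-DoF count $L'(K'\gamma+1) = K_T\gamma_T + K\gamma$ then follows identically to~\eqref{eq:SchemeDoF}.

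The main obstacle I expect is the transmitter-side placement step: one must ensure that each subfile lives at a complete cooperative cluster (so that joint ZF is possible) \emph{without} further subdividing subfiles and inflating $S$ by a factor of $\binom{K_T}{K_T\gamma_T}$ as in the scheme of~\cite{NMA:17TIT}. Decoupling the transmitter-side assignment from the receiver-side MN partition — by assigning \emph{whole} subfiles, rather than sub-subfiles, to fixed clusters — is what keeps the subpacketization at the claimed value. A secondary, purely bookkeeping obstacle is handling divisibility: when $L' \nmid K$ or $L' \nmid K\gamma$, memory-sharing across neighbouring integer operating points (in the same spirit as the note at the start of Section~\ref{sec:mainResults}) closes the gap with negligible loss.
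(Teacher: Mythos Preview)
Your proposal is correct and follows essentially the same reduction as the paper: treat any $L'=K_T\gamma_T$ transmitters sharing a subfile as a virtual $L'$-antenna array, keep the receiver-side grouped MN placement of Section~\ref{sec:Scheme} unchanged, and avoid the $\binom{K_T}{K_T\gamma_T}$ blow-up by storing whole (sub)files at the transmitters rather than further subdividing. The only notable difference is in the transmitter-side placement: the paper caches \emph{whole files} cyclically, i.e.\ $Z_{\text{Tx}_m}=\{W_{1+(n-1)\bmod N}: n\in\{1+(m-1)M,\dots,Mm\}\}$, so that file $W_n$ (and hence every subfile of it) lands at $K_T\gamma_T$ consecutive transmitters modulo $K_T$; you instead partition the transmitters into $1/\gamma_T$ disjoint clusters and round-robin subfiles across them. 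Both realize the same idea, but your disjoint-cluster partition implicitly needs $K_T\gamma_T\mid K_T$ (equivalently $1/\gamma_T\in\mathbb{Z}$), whereas the paper's sliding cyclic assignment works whenever $K_T\gamma_T$ is an integer, so you may want to replace the disjoint clusters by a cyclic assignment to match the stated generality.
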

\vspace{3pt}
\begin{proof}The constructive proof of the above is described in the Appendix.\end{proof}

\paragraph{Effects of cache-aided transmitter-cooperation on coded caching}
Given Corollary~\ref{cor:ICversion}, it is not difficult to conclude that all the previous corollaries apply directly to the $K_T\times K$ cache-aided interference scenario, after substituting $L$ with $K_T\gamma_T$. In particular, drawing from the previous corollaries, we can summarize the following results that apply to cache-aided transmitter cooperation.

\begin{itemize}
\item As the transmitter-side cache redundancy $K_T\gamma_T$ increases, the effective DoF will either be increased by a multiplicative factor of $K_T\gamma_T$, or it will reach the theoretical (unconstrained) DoF $K_T\gamma_T+K\gamma$ (cf. Corollary~\ref{cor:multiBoostMulticasting}).
\item In the presence of transmitter-side cache redundancy $K_T\gamma_T$, the effective caching gain is bounded below by $(K_T\gamma_T)\cdot \frac{\log S_{max}}{1+\log(\frac{1}{\gamma})}$ (cf.~Corollary~\ref{cor:EffectiveGain}).
\item Increasing the transmitter-side cache redundancy $K_T\gamma_T$, allows for an exponentially reduced minimum applicable $\gamma \geq \left(\left(S_{\max}\right)^{-1/G}\right)^{K_T\gamma_T}$ that can offer a (receiver-side) caching gain of $G = K\gamma$ (cf.~Section~\ref{sec:ResultEffectiveGains}).
\item Subpacketization $S = \binom{\frac{K}{K_T\gamma_T}}{x-1}$ can yield a sum DoF that is $x$ times the cooperative multiplexing gain $K_T\gamma_T$ (cf.~Corollary~\ref{cor:multiplicativeInput}).
\item In asymptotic terms, as long as the transmitter-side cache redundancy $K_T\gamma_T$ scales with the receiver cache redundancy $K\gamma$, the entire sum-DoF $K_T\gamma_T+K\gamma$ is achievable with constant subpacketization (cf.~Corollary~\ref{cor:multiplicativeInput2}).
\item When the transmitter-side and receiver-side cache redundancies match (i.e., when $K_T\gamma_T=K\gamma$), the DoF $K_T\gamma_T+K\gamma$ can be achieved with subpacketization $S_{K_T\gamma_T} = \frac{K}{K_T\gamma_T}$ (cf.~Corollary~\ref{cor:LisKgamma}).
\end{itemize}

\vspace{3pt}

\paragraph{Base-station cooperation for boosting coded caching}

The following corollary also holds.
\vspace{3pt}
\begin{corollary} \label{cor:ICversionMultiAntenna}
In the $K_T\times K$ cache-aided interference scenario with $\gamma_T\geq \frac{1}{K_T}$, if each transmitter has $L_T$ transmitting antennas, the sum-DoF of $K_TL_T\gamma_T+K\gamma$, can be achieved with subpacketization of $$S_{K_TL_T\gamma_T} = \binom{\frac{K}{K_TL_T\gamma_T}}{\frac{K\gamma}{K_TL_T\gamma_T}}.$$
Thus when $K_TL_T\gamma_T =K\gamma$ this sum-DoF can be achieved with subpacketization
\[S = \frac{K}{K_TL_T\gamma_T}.\]
\end{corollary}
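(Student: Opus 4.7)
The plan is to combine the receiver-grouping idea of Theorem~\ref{thm:SubpacketizationL} with the transmitter-cooperation argument underlying Corollary~\ref{cor:ICversion}. The central observation is that when each of the $K_T$ transmitters carries $L_T$ antennas, any set of $K_T\gamma_T$ cooperating transmitters that jointly hold a given subfile forms a distributed antenna array of effective dimension $K_T L_T \gamma_T$. This cooperative spatial dimension is exactly what plays the role of $L$ in Theorem~\ref{thm:SubpacketizationL}, so the natural move is to substitute $L \mapsto K_T L_T \gamma_T$ throughout the scheme of Section~\ref{sec:Scheme} and inherit the corresponding subpacketization.

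Concretely, set $L' \defeq K_T L_T \gamma_T$ and partition the $K$ receivers into $K' \defeq K/L'$ groups of $L'$ users each, as in the proof of Theorem~\ref{thm:SubpacketizationL}. Assign a common cache to each group, indexed by subsets $\tau \in [K']$ with $|\tau|=K'\gamma$, yielding subpacketization $\binom{K'}{K'\gamma} = \binom{K/(K_T L_T \gamma_T)}{K\gamma/(K_T L_T \gamma_T)}$, which matches the claim. For the transmitter side, I would reuse the placement constructed in the proof of Corollary~\ref{cor:ICversion}, so that every subfile $W_n^\tau$ is stored at exactly $K_T\gamma_T$ transmitters; this guarantees the required transmitter-side cache redundancy while remaining compatible with the receiver grouping.

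The delivery phase then proceeds by enumerating $\chi \in \mathcal{X}$ and transmitting, for each such $\chi$, a signal analogous to \eqref{eq:transmission}, where now each precoder $\vv^{\mathcal{G}_g \setminus k}$ is implemented jointly by the $K_T\gamma_T$ transmitters that store the subfile being carried. These transmitters pool their $K_T L_T \gamma_T$ antennas to zero-force the $L'-1 = K_T L_T \gamma_T - 1$ in-group interferers of the intended receiver $k \in \mathcal{G}_g$. The receiver-side `caching-out' of out-of-group summands, followed by the ZF step that isolates the desired $W^{\chi\setminus g}_{R_k}$, is identical to the argument in Section~\ref{sec:Scheme}, and the DoF count $L' + K\gamma = K_T L_T \gamma_T + K\gamma$ then follows directly from \eqref{eq:SchemeDoF}. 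The special case $K_T L_T \gamma_T = K\gamma$ immediately yields $S = K/(K_T L_T \gamma_T)$ via Corollary~\ref{cor:LisKgamma}.

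The main obstacle will be verifying that the combined placement is simultaneously consistent with the receiver-grouping construction and the transmitter-cooperation construction: for every chosen $\chi$ and every intended subfile $W^{\chi\setminus g}_{R_k}$, there must actually exist a set of $K_T\gamma_T$ cooperating transmitters, each holding that subfile, so they can act jointly as a $K_T L_T \gamma_T$-antenna transmitter capable of nulling the in-group interference. This consistency is precisely what is already established in the proof of Corollary~\ref{cor:ICversion} for the single-antenna-per-transmitter case; lifting it to $L_T > 1$ requires only that each cooperating transmitter contribute $L_T$ rather than a single spatial degree of freedom, after which the interference-nulling count matches and the rest of the analysis transfers without modification.
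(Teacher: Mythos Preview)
Your proposal is correct and follows essentially the same approach as the paper: reuse the transmitter-side placement from Corollary~\ref{cor:ICversion} so that each subfile is held by $K_T\gamma_T$ transmitters, observe that with $L_T$ antennas per transmitter this makes each subfile available at $L=K_TL_T\gamma_T$ antennas, and then invoke the scheme of Section~\ref{sec:Scheme} with this effective $L$. The paper's own proof is terser but makes precisely this substitution, and your discussion of the consistency check (that the cooperating transmitters holding a given subfile can jointly implement the required ZF precoder) spells out what the paper leaves implicit.
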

\vspace{3pt}
The proof of the above is described briefly in the Appendix.

\vspace{3pt}
\begin{example}[Base-station cooperation]
Let us consider a scenario where in a dense urban setting, a single base-station ($K_T=1$) serves $K=10000$ cell-phone users, who are each willing to dedicate 20 Gigabytes of their phone's memory for caching parts from a Netflix library of $N=10000$ low-definition movies. Each movie is 1 Gigabyte in size, and the base-station can store 10 Terabytes. This corresponds to having $M = 20$, $\gamma = M/N = 1/500$, and $\gamma_T = 1$. If $L_T = 1$ (single transmitting antenna), a caching gain of $G = 20$ would have required (given the MN algorithm) subpacketization of $S_1 = \binom{K}{K\gamma} = \binom{10000}{20} > 10^{61}$.

If instead we had two base-stations ($K_T = 2$) with $L_T = 5$ transmitting antennas each, this gain would require subpacketization $S_L = \binom{\frac{K}{K_TL_T}}{\frac{K\gamma}{K_TL_T}} = \binom{10000/10}{20/10} = \binom{1000}{2} \approx 5\cdot 10^5$ (hence here, the introduction of caching would triple the total number of users served at a time), while with $K_T = 4$ such cooperating base-stations, this gain could be achieved with subpacketization of $\binom{10000/20}{20/20} = 500$.

If the library is now reduced to the most popular $N=1000$ movies (and without taking into consideration the cost of cache-misses due to not caching the tail of unpopular files), then the same 20 Gigabyte memory at the receivers would correspond to $\gamma = 1/50$ and to a theoretical caching gain of $G = K\gamma = 200$ additional users served per second per hertz. In this case, having a single large-MIMO array with $L_T = 100$ antennas, or having $K_T = 5$ cooperating base-stations with $L_T = 20$ antennas each, would yield a DoF $d_L(\gamma) = 300$ (caching would allow us to serve 200 additional users at a time), at subpacketization $S_L = \binom{10000/100}{200/100} = \binom{100}{2} \approx 5000$.
\end{example}
\vspace{3pt}

\subsection{Near-optimality of schemes}
The schemes that we have employed here (as described in Section~\ref{sec:Scheme} and in the Appendix) have the `one-shot, linear' property which means that each data element is manipulated linearly, and only once (a data bit is not transmitted more than once). This lends all the above results, except Corollary~\ref{cor:ICversionMultiAntenna}, amenable to the analysis in~\cite{NMA:17TIT} whose outer bound then allows us to directly conclude that the schemes are near optimal. This is described below, for purposes of completeness, in the form of a corollary.

\vspace{3pt}
\begin{corollary}\label{cor:nearOpt}
The described subpacketization $S_L = \binom{\frac{K}{L}}{\frac{K\gamma}{L}}$ and $S_{K_T\gamma_T}=\binom{\frac{K}{K_T\gamma_T}}{\frac{K\gamma}{K_T\gamma_T}}$ guarantees sum-DoF performance that is at most a factor of 2 from the theoretical optimal linear-DoF.
\end{corollary}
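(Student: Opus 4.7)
The plan is to reduce both claims to a direct application of the converse in \cite{NMA:17TIT}. The two ingredients I need are: (i) that both schemes fall inside the class of one-shot linear schemes, so that the outer bound of \cite{NMA:17TIT} applies to their performance; and (ii) that this outer bound is already known to sit within a factor of $2$ of the achieved rate $L+K\gamma$ (respectively $K_T\gamma_T+K\gamma$).

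First I would verify the one-shot linear property directly from the construction. In the MISO scheme of Section~\ref{sec:Scheme}, inspection of \eqref{eq:transmission} shows that each subfile $W^{\chi\setminus g}_{R_k}$ appears as the scalar coefficient of exactly one precoding vector in exactly one transmitted vector $\xv_\chi$, and the decoding in \eqref{eq:receivedSignal1}--\eqref{eq:receivedSignalAfterCachingOut} consists only of cache-based linear subtraction followed by linear zero-forcing. The interference-channel scheme of the appendix has the same structure. Hence both schemes lie in the one-shot linear class of \cite{NMA:17TIT}.

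Next I would apply the outer bound. For the interference-channel scheme with subpacketization $S_{K_T\gamma_T}$, the converse of \cite{NMA:17TIT} states that the one-shot linear sum-DoF cannot exceed $2(K_T\gamma_T+K\gamma)$, while Corollary~\ref{cor:ICversion} gives achievability of $K_T\gamma_T+K\gamma$, yielding the factor of $2$. For the MISO scheme with subpacketization $S_L$, I would identify the $L$-antenna transmitter with $K_T=L$ colocated full-library transmitters ($\gamma_T=1$, so $K_T\gamma_T=L$); since $L$ colocated antennas are at least as capable as $L$ independent full-library transmitters, any one-shot linear outer bound for the interference version also bounds the MISO one-shot linear sum-DoF from above by $2(L+K\gamma)$, matched within a factor of $2$ by the achievable $L+K\gamma$ of Theorem~\ref{thm:SubpacketizationL}.

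The one point that deserves explicit checking, and which I expect to be the main (still minor) obstacle, is that the one-shot linear label survives the memory-sharing step used in Section~\ref{sec:nonInteger} to cover the non-integer case $L\nmid K$ or $L\nmid K\gamma$: there each file bit still appears in a single sub-scheme instance and is linearly precoded only once, so the property is preserved, but this should be stated so that the factor-of-two guarantee holds uniformly rather than only in the divisibility-friendly regime. A parallel check should be made that the converse of \cite{NMA:17TIT} is stated for the full range $\gamma_T\ge 1/K_T$ used in Corollary~\ref{cor:ICversion}; if the statement there is narrower, a short cache-partitioning argument extends it to the needed regime.
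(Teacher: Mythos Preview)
Your proposal is correct and follows essentially the same approach as the paper: the paper's own proof is a single sentence invoking the one-shot linear property of the schemes, their achieved DoF, and the outer bound of \cite{NMA:17TIT}. Your additional checks (that the one-shot linear label survives memory sharing, and that the converse covers the relevant $\gamma_T$ range) are careful elaborations that the paper does not make explicit, but the core argument is identical.
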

\begin{proof} As stated, the proof is direct from the bound in~\cite{NMA:17TIT}, from the performance achieved by the schemes here, and from the fact that the schemes have the `one-shot linear' property.
\end{proof}

\vspace{3pt}

We also note here that, to remove the integer constraints $L | K$ and $L | K\gamma$, we can readily use memory sharing as in \cite{MN14}. This is shown in the appendix, where we see that after removing the integer constraints, the results remain approximately the same except for a marginal increase in subpacketization
 to at most $S_L\leq
	L\cdot\max\left\{\binom{\lceil K/L\rceil}{\lceil K\gamma/L+1\rceil},\binom{\lceil K/L\rceil}{\lfloor K\gamma/L+1\rfloor}\right\}$, and a relatively small reduction in the achieved DoF ($d_L(\gamma) = L+K\gamma$) by a multiplicative factor (gap) that is bounded above by $\frac{5}{3}$ when $L>K\gamma$, and by $\frac{4}{3}$ when $L<K\gamma$ in which case the gap vanishes (converges to 1) as $K$ increases.

\subsection{Elevating different coded caching algorithms to the $L$ antenna setting \label{sec:resultsAlternateSchemes}}
The aforementioned subpacketization can be further reduced when considering alternate coded caching algorithms. We recall that the scheme that we have presented, involved `elevating' the original MN algorithm in~\cite{MN14}, from the single-stream scenario ($L=1$) with $K' = K/L$ users, to the $L$-antenna case with $K'$ groups of $L$-users per group. This same idea can apply \emph{directly} to other centralized coded caching algorithms like those in~\cite{yan2015placement,tang2016coded,shanmugam2017coded}, in which case the steps are almost identical:
\begin{itemize}
\item Choose the new coded caching algorithm for the single-stream $K'$-user scenario.
\item Split the $K$ users into $K'$ groups of $L$ users each, and employ the new algorithm to fill the caches as in the $K'$-user single-stream case, as if each group is a user, such that same-group users have caches that are identical.
\item Using the coded caching algorithm for the single-stream $K'$-user scenario, generate the sequence of XORs. Each XOR consists of $d^{'}_{1}(\gamma)$ summands, where $d^{'}_{1}(\gamma)$ is the theoretical sum-DoF provided by the coded caching algorithm in the $K'$-user single-antenna (single stream) BC.
\item Each element (summand) of the XOR, corresponds to a group of users, and each such XOR summand is replaced by a (precoded) $L$-length vector that carries the $L$-requests of the associated group. Add these $d^{'}_1$ vectors together, to form a composite transmitted vector that corresponds to the XOR.
\item Each composite vector treats a total of $d^{'}_1$ groups at a time, i.e., treats $L\cdot d^{'}_1(\gamma)$ users at a time.
\item Then continue with the rest of the XORs.
\end{itemize}
Hence we recall that when\footnote{We will henceforth use the term `elevate' to correspond to when we apply a single-stream coded caching algorithm to the multi-antenna case, via the above sequence of steps.} elevating the MN algorithm --- which, for the single-stream $K'$-user case, treats $d^{'}_1(\gamma) = K'\gamma+1$ users at a time --- we treated $d^{'}_1 = K'\gamma+1$ groups at a time, thus treating a total of $d_L(\gamma)=L\cdot d^{'}_1(\gamma) = L+K\gamma$ users at a time. On the other hand, when elevating for example the algorithms in~\cite{yan2015placement,tang2016coded}, we would naturally have to change the cache placement and the sequence of XORs, and we would have to account for the fact that --- for the single stream $K'$-user case --- the algorithm treats $d^{'}_{1,pd} = K'\gamma$ users at a time (not $K'\gamma+1$), and thus for $L\geq 1$, we would treat $d^{'}_{1,pd}  = \frac{K}{L}\gamma$ groups at a time ($L\leq K\gamma$), thus treating a total of $d_{L,pd}(\gamma)=L\cdot d^{'}_{1,pd}= K\gamma$ users at a time (not $K\gamma+L$).

The following corollary describes the effective caching gain provided by the scheme that elevates to the $L$ antenna case, the placement-delivery array (PD) and linear code (LC) algorithms in \cite{yan2015placement} and \cite{tang2016coded}. These algorithms exist under some constraints on $\gamma$.

\vspace{3pt}
\begin{corollary} \label{cor:EffectiveGainPD}
Given a maximum allowable subpacketization $S_{max}$, the effective caching gain of the here-elevated PD and LC algorithms, takes the form
\begin{equation}\label{eq:EffGainPDA}
\bar{G}_{L,pd} =\bar{G}_{L,lc} = \min\{ \ L\cdot \frac{\log S_{max}}{\log(\frac{1}{\gamma})} \ , K\gamma-L   \}.\end{equation}
\end{corollary}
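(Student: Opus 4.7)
The plan is to mirror the derivation used for Corollary~\ref{cor:EffectiveGain}, but with the PD/LC subpacketization $(1/\gamma)^{G}$ playing the role that the binomial $\binom{K'}{K'\gamma}$ played in the MN case. The starting ingredient is the fact recalled in Section~\ref{sec:IntroBottleneck} that, for the single-stream $K'$-user BC, both the PD array construction of~\cite{yan2015placement} and the linear-code construction of~\cite{tang2016coded} achieve a theoretical caching gain of $G_{1,pd} = K'\gamma - 1$ (and analogously $G_{1,lc}$) with subpacketization $S_{1,pd} \approx (1/\gamma)^{K'\gamma-1}$. Because the two algorithms share the same exponent, their elevated counterparts will give the same bound, which is why the statement writes $\bar{G}_{L,pd} = \bar{G}_{L,lc}$; I will carry out the argument only for the PD case.

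Next I would apply the elevation recipe of Section~\ref{sec:resultsAlternateSchemes} with $K' = K/L$: split the $K$ users into $K'$ groups of size $L$, give identical caches to same-group users using the PD placement on the $K'$-user instance, and replace every scalar XOR summand in the delivery by a ZF-precoded $L$-vector that carries the requests of the corresponding group. As already observed in the text preceding the corollary, this treats $d'_{1,pd}(\gamma) = K'\gamma$ groups, hence $L \cdot K'\gamma = K\gamma$ users per transmission, so the elevated theoretical caching gain is $G_{L,pd} = K\gamma - L$. Crucially, the elevation does not alter the subpacketization, since each file is still cut into $|\mathcal{T}|$ pieces indexed by the $K'$-user base scheme; therefore $S_{L,pd}(K) = S_{1,pd}(K/L) \approx (1/\gamma)^{K\gamma/L - 1}$.

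To extract the effective gain, I would impose $S_{L,pd}(\bar{K}) \leq S_{max}$ and solve for the largest admissible $\bar{K}$. Taking logarithms in $(1/\gamma)^{\bar{K}\gamma/L - 1} \leq S_{max}$ yields
\[
\bar{K}\gamma - L \;\leq\; L\cdot\frac{\log S_{max}}{\log(1/\gamma)}.
\]
The effective caching gain delivered by the elevated scheme on those $\bar{K}$ users is exactly $\bar{G}_{L,pd} = \bar{K}\gamma - L$, and combining the above inequality with the trivial cap $\bar{K} \leq K$ (which contributes the $K\gamma - L$ term inside the $\min$) gives the claimed expression~\eqref{eq:EffGainPDA}. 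The same chain of steps, with $G_{1,pd}$ replaced by $G_{1,lc}$, handles the LC algorithm verbatim.

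The main obstacle I foresee is purely bookkeeping rather than conceptual: the ``$\approx$'' in $S_{1,pd} \approx (1/\gamma)^{G_{1,pd}}$ hides lower-order factors and tacitly assumes the PD/LC constructions exist for the operating point $(K',\gamma)$, a restriction the paper flags just before the statement. A fully rigorous write-up should either restrict attention to those $\gamma$ and $K'$ for which the base constructions are defined, or absorb the slack into two-sided Stirling-type bounds exactly as was done in~\eqref{eq:approximationSmn} and in the proof of Corollary~\ref{cor:EffectiveGain}. Beyond that caveat, the argument is a clean transplantation of the single-antenna bound~\eqref{eq:RamQifaEffectiveGain} into the $L$-antenna elevation with $K \mapsto K/L$, which is precisely what produces the multiplicative factor of $L$ inside the minimum.
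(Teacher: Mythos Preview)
Your proposal is correct and follows essentially the same route as the paper's proof: identify the elevated theoretical gain as $G_{L,pd}=K\gamma-L$, note that the inherited subpacketization $S_{L,pd}=(1/\gamma)^{K'\gamma-1}$ can be rewritten as $(1/\gamma)^{G_{L,pd}/L}$, and invert the constraint $S_{L,pd}\le S_{max}$ to obtain $\bar{G}_{L,pd}=L\cdot\log S_{max}/\log(1/\gamma)$ capped at $K\gamma-L$. The paper's version is more compressed (it skips the explicit $\bar{K}$ bookkeeping you spell out), but the substance is identical, and your caveat about the ``$\approx$'' and the parameter restrictions on the base PD/LC constructions is appropriate.
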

\vspace{3pt}

\begin{proof}
With a theoretical gain $G_{L,pd} = d_{L,pd}(\gamma)-d_{L,pd}(\gamma=0)=K\gamma-L$, the underlying subpacketization $S_{L,pd} = \left( \frac{1}{\gamma}\right)^{K'\gamma - 1}$ can be written as $S_{L,pd} = \left( \frac{1}{\gamma}\right)^{\frac{G_{L,pd}}{L}}$, and thus the effective gain is $\bar{G}_{L,pd} = L\cdot \frac{\log S_{max}}{\log(\frac{1}{\gamma})} $, which is bounded by the theoretical caching gain $K\gamma-L$ offered by the scheme in the absence of subpacketization constraints.
\end{proof}
\vspace{3pt}

\paragraph{$L$-fold increase in impact of alternate coded caching algorithms\label{sec:increasedImpact}}
The fact that the underlying coded caching algorithm is used in our design at the level of groups of users, implies that any difference in the effective caching gain between two underlying algorithms in the single-stream case, will be magnified --- once each algorithm is elevated to the $L$-antenna case as was shown here --- by a factor of up to $L$. For example, if we were to compare the elevated MN scheme to, say, the aforementioned elevated PD and LC schemes, we would see (cf.~Corollary~\ref{cor:EffectiveGain} and Corollary~\ref{cor:EffectiveGainPD}) that
\begin{align*}
\bar{G}_{L,pd} & = \min\{ L\cdot \frac{\log S_{max}}{\log(\frac{1}{\gamma})}, K\gamma-L\} \\
\bar{G}_{L} & \geq \min\{ L\cdot \frac{\log S_{max}}{1+\log(\frac{1}{\gamma})}, \ K\gamma\}
\end{align*}
which would tell us that (when $K\gamma$ is an integer) the improvement in effective gains is bounded as
\[ \bar{G}_{L,pd} - \bar{G}_L \leq L\cdot \frac{\log S_{max}}{(\log(\frac{1}{\gamma}))(1+\log(\frac{1}{\gamma}))}.\]
When $L=1$, this improvement --- under realistic assumptions on $\gamma$ and $S_{max}$ --- can be small,
but when the algorithm is elevated to the multi-antenna setting, this improvement increases as a multiple of $L$.
\begin{remark}
This implies that the method proposed here, rather than bypassing the need for novel single-stream coded caching algorithms of reduced subpacketization, it in fact accentuates the importance of searching for such algorithms.
\end{remark}

\section{Conclusions}\label{sec:conclusions}
In the context of coded caching with multiple transmitting antennas (or with multiple transmitters or servers), we have presented a simple scheme which exploits transmitter-side dimensionality to provide very substantial reductions in the required subpacketization, without any sacrifice on the caching gain. As we have seen, this implies that, while in theory the addition of a few antennas provides an \emph{additive} sum-DoF increase of the form
\[ d_1(\gamma) = 1+ G \rightarrow d_L(\gamma) = L+G \]
(allowing the addition of $L-1$ extra users served at a time), in practice and in terms of subpacketization-constrained (effective) DoF, adding a few antennas implies a \emph{multiplicative} DoF increase of the form
\[ \bar{d}_1 = 1+\bar{G}_1 \rightarrow L+L\cdot\bar{G}_1. \]
Comparing the additive DoF increase of $L-1$ to the multiplicative DoF increase of $L$, suggests that for a large range of $L$, the main impact of multiple transmitting antennas is not the multiplexing gain, but rather the boost on the effect of receiver-side coded caching.

\subsection{Intuition on design}
The design was based on the simple observation that multi-node (transmitter-side) precoding, reduces the need for content overlap.
The subpacketization reduction from $\binom{K}{K\gamma}$ to $\binom{K/L}{K\gamma/L}$ was here related to the fact that the receivers of each group have identical caches. Subpacketization can generally increase because there needs to be a large set of pairings between the different caches.
Here the number of different distinct caches is reduced, and thus the number of such pairings remains smaller.

\subsection{Practicality and timeliness of result}
The scheme consists of the basic implementable ingredients of ZF and low-dimensional coded caching, and it works for all values of $K,L,\gamma,K_T,\gamma_T$. Its simplicity and effectiveness suggest that having extra transmitting antennas (servers) can play an important role in making coded caching even more applicable in practice, especially at a time when subpacketization complexity is the clear major bottleneck of coded caching, and also at a time when multiple antennas and transmitter cooperation are standard ingredients in wireless communications.

\paragraph{Separability between coded caching and PHY}
The result also advocates that some degree of joint consideration between cache-placement and network structure (here, for receiver-side cache-placement and `XOR' generation, we only need to know \emph{the number} of transmitters and receivers), can yield very substantial improvements in the effective DoF, as well as can maintain substantial (although certainly not complete) robustness to not knowing the exact network structure during the cache-placement phase.
While universal coded caching schemes that work obliviously of the structure of the communication network (cf.~\cite{NaderializadehMA17a}) carry an advantage when it comes to some robustness against network-structure uncertainty, the work here shows an instance where non-separated schemes have the potential to provide unboundedly better overall effective gains over universal schemes, by exploiting some of the structure of the network and by jointly considering coded caching and PHY.

\section{Appendix}

\subsection{Adapting to the cache-aided interference scenario}\label{sec:IC}

We now consider the cache-aided interference scenario studied in~\cite{NMA:17TIT}, with $K$ independent receivers, and with $K_T$ independent transmitters, where each transmitter has normalized cache size $\gamma_T = M_T/N$, where $fM_T$ is the size of each transmitter's cache. The scenario involves full connectivity (each receiver is connected to $K_T$ transmitters), and no information can be exchanged between the transmitters.

For transmitter-side cache placement, we ask that each subfile is placed at exactly $K_{T}\gamma_{T}$ transmitters, and to do so, we consecutively cache whole files into the transmitters, such that the first transmitter caches the first $M$ files, the second transmitter the next $M$ files, and so on, modulo $N$. Specifically, using $Z_{\text{Tx}_m}$ to denote the cache of transmitter $m\in [K_{T}]$, then the placement
\begin{align*}
	Z_{\text{Tx}_m}\!=\!\big{\{} W_{1+(n-1)\text{mod} N}: n\in \{1+(m-1)M,..., Mm \}\big{\}}
\end{align*}
guarantees the redundancy requirements and memory constraints. Now, for any given subfile, the $K_T\gamma_T$ transmitters that have access to this file, will employ CSIT in order to play the role of the aforementioned $L = K_T\gamma_T$ antennas, by precoding this said subfile using the exact same precoders described before, allowing for simultaneous separation of the $L=K_T\gamma_T$ streams within any given group $\mathcal{G}_g$ of $L=K_T\gamma_T$ receivers. As before, the aforementioned caching allows for treatment of $K'\gamma+1$ groups at a time, and a treatment of $K_T\gamma_T+K\gamma\leq K$ users at a time (Corollary~\ref{cor:ICversion}).

Finally it is easy to see that the above idea holds directly for the case where --- in the above $K_T\times K$ cache-aided interference scenario with $\gamma_T\geq \frac{1}{K_T}$ --- each transmitter has $L_T$ transmitting antennas. In this case we can see that this same placement method has the desired property that each subfile is available at $L=K_TL_T\gamma_T$ antennas, yielding a sum-DoF of $K_TL_T\gamma_T+K\gamma$ which can be achieved with subpacketization $$S_{K_TL_T\gamma_T} = \binom{\frac{K}{K_TL_T\gamma_T}}{\frac{K\gamma}{K_TL_T\gamma_T}}$$ as mentioned in Corollary~\ref{cor:ICversionMultiAntenna}.

\subsection{General scheme: removing the integer constraint}\label{sec:nonInteger}

We proceed to remove the constraints $L | K$ and $L | K\gamma$, by applying as in \cite{MN14} memory sharing. The results, after removing the integer constraints, will remain approximately the same except for a marginal increase in subpacketization\footnote{Note that for the settings in \cite{MN14,ShariatpanahiMK16it,NMA:17TIT}, the aforementioned subpacketization costs in~\eqref{eq:Fss},\eqref{eq:Sms} and~\eqref{eq:Sic} do not account for the extra subpacketization costs due to memory sharing.}
 to at most $S_L\leq
    K\cdot\max\left\{\binom{\lceil K/L\rceil}{\lceil K\gamma/L+1\rceil},\binom{\lceil K/L\rceil}{\lfloor K\gamma/L+1\rfloor}\right\}$ and a relatively small reduction in the achieved DoF ($d_L(\gamma) = L+K\gamma$) by a multiplicative factor (gap) that is bounded above by $2$ when $L>K\gamma$ and by $\frac{3}{2}$ when $L<K\gamma$, while the gap vanishes as $\frac{K\gamma}{L}$ increases.

To remove the constraint $L | K$ we will add to the system phantom users such that the new (hypothetical) number of users is $\hat{K}=L\left\lceil \frac{K}{L}\right\rceil$. Moreover, if $L\nmid \hat{K}\gamma$ we will perform memory sharing (cf.~\cite{MN14}) by splitting each file $W_n$ into two parts, $W_{n}', W_{n}''$ of different sizes $|W_{n}'| = p|W_{n}|$ and $|W_{n}''| = (1-p)|W_{n}|$, and cache each part with normalized cache sizes $\gamma'=\frac{|Z_k\cap W_{n}'|}{|W_{n}'|}= \frac{L}{\hat{K}}\left\lfloor \frac{\hat{K}\gamma}{L}\right\rfloor$ and $\gamma''=\frac{|Z_k\cap W_{n}''|}{|W_{n}''|}=\frac{L}{\hat{K}}\left\lceil \frac{\hat{K}\gamma}{L}\right\rceil$, which guarantees that $L | \hat{K}\gamma'$ and $L | \hat{K}\gamma''$. This also gives that $p=\frac{\gamma''-\gamma}{\gamma''-\gamma'}$.

Then, as the original scheme describes, we divide $W_{n}'$ into $\binom{\hat{K}/L}{\hat{K}\gamma'/L}$ parts, $W_{n}''$ into $\binom{\hat{K}/L}{\hat{K}\gamma''/L}$ parts, and cache from $W_{n}',W_{n}''$ according to~\eqref{eq:cachePlacement}. The corresponding subpacketization cost is thus bounded as
\begin{align}
    S  & \le K\cdot\max\left\{\binom{\hat{K}/L}{\hat{K}\gamma'/L},\binom{\hat{K}/L}{\hat{K}\gamma''/L}\right\} \nonumber \\ & \le K\cdot\max\left\{\binom{\lceil K/L\rceil}{\lceil K\gamma/L\rceil+1},\binom{\lceil K/L\rceil}{\lfloor K\gamma/L\rfloor+1}\right\}
\end{align}
where the multiplicative factor of $K$ is the one that upper bounds the subpacketization effect of splitting the file in two parts before subpacketizing each part. This effect is bounded by $K$ because $p\geq 1/K$ by virtue of the fact that $K\gamma$ is an integer\footnote{To see this, we rewrite $\gamma$ as $\gamma=a/K$ where $a$ is an integer, and then we see that
$p = \frac{\gamma''-\gamma}{\gamma''-\gamma'}=\frac{\left\lceil\frac{\hat{K}a}{KL}\right\rceil-\frac{a\hat{K}}{KL}}{\left\lceil\frac{\hat{K}a}{KL}\right\rceil-\left\lfloor\frac{\hat{K}a}{KL}\right\rfloor} >\frac{1}{K}$
where, in the last step we used the fact that the denominator is 1 (unless it is zero, in which case there is no additional subpacketization cost), while for the numerator we have that $\left \lceil\frac{\hat{K}a}{KL}\right\rceil-\frac{a\hat{K}}{KL}>\frac{1}{K}$ because $L| \hat{K}a$.}.

Then, in order to derive a multiplicative gap on DoF, $d_L^{nc}$, that accounts for removing the two constraints, we will consider two separate cases. 
First, we will look at the case of $\hat{K}\gamma\le L$. By applying memory sharing, we can see that each part will be cached with redundancy $0$ and $L$ respectively. This means that the completion time will be $T=\frac{m'}{0+L}+\frac{m''}{L+L}$, where $m'=Kp(1-\gamma')$ and $m''=K(1-p)(1-\gamma'')$. Then, we can see that the completion time is upper-bounded $T\le \frac{K(1-\gamma)}{L}$ and lower-bounded $T\ge \frac{K(1-\gamma)}{2L}$, which incorporates the facts that the performance cannot be worse than if there was no caching gains, but it cannot be better than if the caching gain was $L$. Using that, we can calculate the bounds of the DoF as follows
\begin{align*}
	\frac{K(1-\gamma)}{L}&\ge T\ge \frac{K(1-\gamma)}{2L}\\
	\frac{K(1-\gamma)}{\frac{K(1-\gamma)}{2L}}&\ge d_L^{nc}\ge \frac{K(1-\gamma)}{\frac{K(1-\gamma)}{L}}\\
	2L&\ge d_L^{nc}\ge L
\end{align*}
which implies a gap of 2.

Similarly, for $K\gamma\in (qL, qL+1),~~ q=\{1,2,...\}$ we can see that the above gap becomes $\frac{q+1}{q}$.


\bibliographystyle{IEEEtran}

	\end{document}